\newtheorem{theorem}{Theorem}[section]
\newtheorem{claim}[theorem]{Claim}
\newtheorem{proposition}[theorem]{Proposition}
\newtheorem{lemma}[theorem]{Lemma}
\newtheorem{corollary}[theorem]{Corollary}
\newcommand{\ProblemFormat}[1]{{\sc #1}}
\newcommand{\ProblemName}[1]{\ProblemFormat{#1}\xspace}
\newcommand{\schulzew}[0]{\ProblemName{Schulze-Winner}}
\newcommand{\schulzeaw}[0]{\ProblemName{Schulze-AllWinners}}
\newcommand{\schulzewd}[0]{\ProblemName{Schulze-WinnerDetermination}}
\newcommand{\Oh}{\ensuremath{\mathcal{O}}}
\newcommand{\tO}{\tilde{\Oh}}
\newcommand{\calW}{\ensuremath{\mathcal{W}}}
\newcommand{\naturals}{{{\mathbb{N}}}}
\newcommand{\eps}{\varepsilon}
\title{Fine-Grained Complexity and Algorithms for\\
       the Schulze Voting Method\thanks{A previous version of this work appears in EC 2021~\cite{SornatVWX21}. In this version we strengthen Theorem~\ref{thm:lb_testing_winner} which now holds also for the problem of finding a Schulze winner.}}
\date{}
\author[1]{Krzysztof Sornat\thanks{sornat@mit.edu. Now at AGH University, Poland.}}
\author[1]{Virginia Vassilevska Williams\thanks{virgi@mit.edu}}
\author[1]{Yinzhan Xu\thanks{xyzhan@mit.edu}}
\affil[1]{MIT CSAIL, USA}
\begin{document}

\maketitle

\begin{abstract}

We study computational aspects of a well-known single-winner voting rule called the Schulze method [Schulze, 2003] which is used broadly in practice. In this method the voters give (weak) ordinal preference ballots which are used to define the weighted majority graph of direct comparisons between pairs of candidates. The choice of the winner comes from indirect comparisons in the graph, and more specifically from considering directed paths instead of direct comparisons between candidates.

When the input is the weighted majority graph, to our knowledge, the fastest algorithm for computing all winners in the Schulze method uses a folklore reduction to the All-Pairs Bottleneck Paths (APBP) problem and runs in $\Oh(m^{2.69})$ time, where $m$ is the number of candidates. It is an interesting open question whether this can be improved. Our first result is a combinatorial algorithm with a nearly quadratic running time for computing all winners. This running time is essentially optimal as it is nearly linear in the size of the weighted majority graph.

If the input to the Schulze winners problem is not the weighted majority graph but the preference profile, then constructing the weighted majority graph is a bottleneck that increases the running time significantly; in the special case when there are $m$ candidates and $n=\Oh(m)$ voters, the running time is $\Oh(m^{2.69})$, or $\Oh(m^{2.5})$ if there is a nearly-linear time algorithm for multiplying dense square matrices.

To address this bottleneck, we prove a formal equivalence between the well-studied Dominance Product problem and the problem of computing the weighted majority graph. As the Dominance Product problem is believed to require at least time $r^{2.5-o(1)}$ on $r\times r$ matrices, our equivalence implies that constructing the weighted majority graph in $\Oh(m^{2.499})$ time for $m$ candidates and $n = \Oh(m)$ voters would imply a breakthrough in the study of ``intermediate'' problems [Lincoln et al., 2020] in fine-grained complexity. We prove a similar connection between the so called Dominating Pairs problem and the problem of finding a winner in the Schulze method.

Our paper is the first to bring fine-grained complexity into the field of computational social choice. Previous approaches say nothing about lower bounds for problems that already have polynomial time algorithms. By bringing fine-grained complexity into the picture we can identify voting protocols that are unlikely to be practical for large numbers of candidates and/or voters, as their complexity is likely, say at least cubic.

\end{abstract}

\newpage
\section{Introduction}\label{sec:intro}
We study computational aspects of {\it the Schulze method}\footnote{The method appears under different names such as {\it Beatpath Method/Winner}, {\it Path Voting}, {\it (Cloneproof) Schwartz Sequential Dropping}---see discussion in~\cite{Schulze03}.},
a single-winner voting rule defined on (weak) ordinal preference ballots.
The method was introduced by Markus Schulze~\cite{Schulze03,Schulze11} and has been extensively researched over the years---see, e.g., a survey by Schulze~\cite{Schulze-arxiv18}.

The Schulze voting method definition uses a representation of the votes called {\it the weighted majority graph}, a graph whose vertices are the candidates and whose directed weighted edges roughly capture all pairwise comparisons between candidates (see a formal definition in Section~\ref{sec:preliminaries}).
The Schulze method computes, in the weighted majority graph, for every pair of candidates $u$ and $v$ the smallest weight $B(u,v)$ of an edge on {\it a widest path} between $u$ and $v$.
Here a widest path is a $u$-$v$ path whose minimum edge weight is maximized over all $u$-$v$ paths.
Then, a {\it Schulze winner} is any candidate $u$ s.t. for all other candidates $v$, $B(u,v)\geq B(v,u)$.
A Schulze winner always exists~\cite{Schulze11}, though it may not be unique.
The method can be used to provide a (weak) order over alternatives, and hence can be seen as a preference aggregation method or multi-winner voting rule (taking the top candidates in the ranking as winners).

Schulze defined this method by modifying {\it the Minimax Condorcet method}\footnote{
Also referred to as {\it the Minimax method} and {\it the Simpson-Kramer method}~\cite{Schulze11}.}
to address several criticisms of the Minimax Condorcet method~\cite{Smith73,Tideman87,Saari94}.
Schulze's modified method is {\it Condorcet consistent}, i.e., a candidate who is preferred by a majority over every other candidate in pairwise comparisons (a Condorcet winner) is a winner in the Schulze method.
Schulze's modification results in satisfying a desired set of axiomatic properties not satisfied by Minimax Condorcet:
\begin{itemize}
    \item {\it clone independence:} This criterion proposed by Tideman~\cite{Tideman87}
    requires that a single-winner voting rule should be independent of introducing similar candidates (see also, e.g.,~\cite{ElkindFS12,liquidfeedback14,Brill18}).

    \item {\it reversal symmetry:} This criterion proposed by Saari~\cite{Saari94} means that if candidate $v$ is the unique winner, then $v$ must not be a winner in an election with all the preference orders inverted.

    \item {\it Smith criterion:} The Smith set is the smallest non-empty subset of candidates who (strictly) win pairwise comparison with every candidate not in the subset.
    The Smith criterion is satisfied when the set of winners is a subset of the Smith set~\cite{Smith73}.
\end{itemize}

The Schulze method is a voting rule used broadly in many organizations, e.g.,
the Wikimedia Foundation\footnote{{\it The Wikimedia Foundation elections to the Board of Trustees (2011)}, \url{https://meta.wikimedia.org/wiki/Wikimedia_Foundation_elections/Board_elections/2011/en}, [Online; accessed 23-June-2021].},
Debian Project\footnote{{\it Constitution for the {D}ebian {P}roject (v1.7)}, \url{https://www.debian.org/devel/constitution.en.html}, [Online; accessed 23-June-2021].} and, e.g,
in the software {\it LiquidFeedback}~\cite{liquidfeedback14} (see also a comprehensive list of users in~\cite[Section 1]{Schulze-arxiv18}).

In this paper we focus on computational aspects of the Schulze method.
In particular: 1) constructing a weighted majority graph; 2) checking whether a particular candidate is a winner; 3) finding a winner; 4) listing all winners.

Indicating a winner or a set of all winners is the main goal of a voting rule (tasks 3 and 4 above).
The definition of the Schulze method consists of two steps that can be studied separately---task 1 above is the first of the two steps; it is useful for other voting rules as well.
The decision version of the problem (task 2) is useful when considering the computational complexity of the problem~\cite{CsarLP18}, or when considering bribery, manipulation and control issues~\cite{GaspersKNW13,ParkesX12,MentonS13,HemaspaandraLM16} where one needs to verify whether a particular candidate would become a winner or non-winner.

Given a weighted majority graph, the most general task is to create a (weak) order over the alternatives (as a preference aggregation method).
The best known algorithm for this task, and also for tasks 3 and 4 above, has a running time of $\Oh(m^{2.69})$ in terms of the number of candidates $m$ (see, e.g.,~\cite{wikiwidest}).
This algorithm relies on fast matrix multiplication techniques that can be impractical.
The fastest algorithm without fast matrix multiplication (i.e. combinatorial) has cubic running time~\cite{Schulze03}.
While the running time is polynomial, for large-scale data this running time can be too slow.
To overcome this issue, parallel algorithms for the Schulze method were proposed, and it turns out that some tasks of the method have efficient parallel solutions (e.g. checking if a candidate is a winner subject to the Schulze method is in NL~\cite{CsarLP18}).

It is an interesting open question whether one can beat the known running time in the sequential setting, in particular without using fast matrix multiplication techniques.
It is interesting how fast one can find all winners when the weighted majority graph is given, and whether the majority graph itself can be computed efficiently.

\subsection{Our Contribution}

Let $m$ be the number of candidates, $n$ the number of voters.
Schulze~\cite{Schulze11} showed that the weighted majority graph can be constructed in $\Oh(nm^2)$ time and then the set of all winners can be found in $\Oh(m^3)$ time.

\paragraph{Observations.}
We first make several warm-up observations, connecting the Schulze method to problems in graph algorithms.

The first observation (Proposition~\ref{prop:computing_weights}) is that the problem of constructing the weighted majority graph can be reduced to the so called {\it Dominance Product} problem studied by Matou\v{s}ek~\cite{MatIPL}.
This gives a new running time for the weighted majority graph construction problem of $\tO(n m^{(1+\omega)/2}+n^{(2\omega-4)/(\omega-1)} m^2)\leq \Oh(n m^{1.69}+n^{0.55}m^2)$, where $\omega<2.373$ is the exponent of square matrix multiplication~\cite{AlmanW20,Williams12,Gall14a} and the $\tO$ notation hides subpolynomial factors.
This running time always beats the previously published running time of $\Oh(nm^2)$ when the number of voters and candidates is super-constant.

The second observation (Proposition~\ref{prop:computing_winners_slow}) is folklore (see, e.g., the Wikipedia article~\cite{wikiwidest}): computing the set of all winners (when the weighted majority graph is given) can easily be reduced to the so called All-Pairs Bottleneck Paths (APBP) problem:
given a graph $G$, for every pair of vertices $u,v$, compute the maximum over all $u$-$v$ paths of the minimum edge weight on the path (the so-called {\em bottleneck}).
Using the fastest known algorithm for APBP~\cite{DuanP09}, one can compute all winners in $\tO(m^{(3+\omega)/2})\leq \Oh(m^{2.69})$ time, easily beating the cubic running time.
Actually, after solving APBP we obtain indirect comparisons between all pairs of candidates which can be used to construct a weak order over all candidates.
This is useful in providing a fixed number of winners or as a preference aggregation method.

\paragraph{New improved algorithms.}
Although the above observations are of mostly theoretical interest, they do suggest that further algorithmic improvements can be possible.
We turn our attention to obtaining even faster algorithms.

Our first main contribution is an almost {\em quadratic} time algorithm for finding all winners.
This running time is substantially faster than all previously known algorithms.
Furthermore, as the running time is almost linear in the size of the weighted majority graph, it is {\em essentially optimal}.
The algorithm is combinatorial (it does not use heavy techniques that have large overheads like matrix multiplication), and is potentially of practical interest.

\begin{theorem}
Given a weighted majority graph on $m$ candidates, one can compute the set of all winners of the Schulze method in expected $\Oh(m^2 \log^4(m))$ time.
\end{theorem}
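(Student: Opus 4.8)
The plan is to recast the possible-winner condition as a statement about strongly connected components of threshold subgraphs, and then to compute the answer by recursing on the natural hierarchy of these components.

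\emph{Structural characterization.} For $t\in\reals$ let $G_{\ge t}$ denote the subgraph of the weighted majority graph keeping only the (directed) edges of weight at least $t$. A $u$--$v$ path has bottleneck at least $t$ exactly when all its edges have weight at least $t$, so $B(u,v)=\max\{t : u\text{ reaches }v\text{ in }G_{\ge t}\}$. Therefore $B(u,v)>B(v,u)$ holds iff there is a threshold $t$ at which $u$ reaches $v$ in $G_{\ge t}$ while $v$ does not reach $u$; equivalently, a candidate $u$ is \emph{not} a possible winner iff for some $t$ the strongly connected component (SCC) of $u$ in $G_{\ge t}$ has a strict predecessor in the condensation DAG of $G_{\ge t}$ (i.e.\ is not a source). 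As $t$ decreases the SCCs of $G_{\ge t}$ only merge, so they form a laminar family, a rooted forest $\mathcal T$ whose leaves are the singletons; write $t_X$ for the largest $t$ at which a node $X\in\mathcal T$ is an SCC of $G_{\ge t}$, and $t_{\mathrm{par}(X)}$ for that quantity of its parent. Because ``$X$ receives an inter-component edge of weight $\ge t$'' is monotone in $t$, one checks that $X$ is a source of $G_{\ge t}$ throughout its lifetime iff no edge enters $X$ from $V\setminus X$ with weight $>t_{\mathrm{par}(X)}$; call such an $X$ \emph{good}. Hence $u$ is a possible winner iff every ancestor of $\{u\}$ in $\mathcal T$ is good; and once $X$ is known to be good, the good children of $X$ are precisely the source SCCs of the digraph induced on $X$ by the edges of weight $>t_X$ (edges coming from outside $X$ cannot spoil source-ness, as $X$ is good).

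\emph{The recursion.} This gives the algorithm: compute the SCCs of $G$, discard the non-source ones; for each surviving $X$, determine $t_X$, split $X$ at the threshold just above $t_X$ into its sub-SCCs, recurse into those that are sources, and discard the rest; the singletons ever produced are exactly the possible winners. The one nontrivial subroutine is finding $t_X$. Since $G[X]$ is strongly connected, $t_X$ is the largest $t$ for which a fixed root $r\in X$ both reaches and is reached by every vertex of $X$ inside $G[X]_{\ge t}$, i.e.\ $t_X=\min\bigl(\min_{x\in X}B_X(r,x),\ \min_{x\in X}B_X(x,r)\bigr)$ where $B_X$ is the bottleneck within $G[X]$; two single-source widest-path (Dijkstra-type) computations --- one in $G[X]$ and one in its reverse --- compute this in time near-linear in $|E(G[X])|$.

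\emph{Getting to near-quadratic time.} A direct implementation is only $\Oh(m^3)$ in the worst case: the edges inside a large component are rescanned at every level of $\mathcal T$ above that component's birth (a ``caterpillar'' dendrogram attains this). I would remedy this by, first, threading the relevant induced subgraph down the recursion so that every edge is discarded permanently once it has been consumed at some level; and, second --- the essential point --- never advancing one dendrogram level at a time: from a component $X$ one should jump directly to the threshold at which $X$ first breaks into pieces whose induced subgraphs together contain at most, say, half of $X$'s current edges, while certifying that the unique ``large'' surviving piece stays a source of $G_{\ge t}$ at every skipped threshold --- a certification that is itself a widest-path computation and must be performed without re-reading the already-shed edges. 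Choosing the balancing threshold at random (with a cheap re-check) is what makes the \emph{expected} running time small; the recursion then has depth $\Oh(\log m)$, and together with the $\Oh(\log m)$ overheads of the Dijkstra-type steps and of the sampling this yields $\Oh(m^2\log^4 m)$. I expect this second step --- balancing the recursion so each edge is touched in only polylogarithmically many subproblems while still certifying source-ness across the skipped thresholds --- to be the main obstacle.
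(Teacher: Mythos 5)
Your structural characterization is sound and is essentially the paper's correctness lemma in disguise: the SCCs of the threshold graphs $G_{\ge t}$ form a laminar family, a component stays a "candidate" exactly when no edge of weight above its parent's splitting threshold enters it from outside, and a candidate is a possible winner iff every ancestor of its singleton is such a source component. Up to this point your argument matches the paper's (its Lemma on the invariant $\mathcal{W}(G)=\bigcup_{\textit{scc}\in\mathcal{C}}\mathcal{W}(G[\textit{scc}])$), and it is correct.

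The gap is the running time, and it sits precisely where you say you expect "the main obstacle" to be. Your plan — jump from $X$ directly to the threshold at which $X$ first sheds half of its edges, certify that the large surviving piece remains strongly connected and a source at every skipped threshold "without re-reading the already-shed edges", and pick the balancing threshold at random with a "cheap re-check" — is not an algorithm but a restatement of the difficulty. To locate that threshold, and to certify connectivity/source-ness of the large piece across the skipped levels, you need information about the edges inside and entering the large piece; no mechanism is given for extracting it in time proportional to the small (shed) side only, and the single-source widest-path computations you propose for $t_X$ already cost time near-linear in $|E(G[X])|$, so running them once per level of a caterpillar-shaped laminar tree brings back the $\Oh(m^3)$ behaviour you are trying to avoid. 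No analysis accompanies the randomization, so the claimed depth $\Oh(\log m)$ and the $\Oh(m^2\log^4 m)$ bound are unsupported. What you are implicitly trying to construct is (an offline instance of) decremental strong-connectivity maintenance under edge deletions in increasing weight order, and this is not something one can wave into existence: the paper does not solve it from scratch either, but invokes the decremental SCC data structure of Bernstein, Probst~Gutenberg and Wulff-Nilsen as a black box (total expected $\Oh(|E|\log^4|V|)$ update time against a non-adaptive adversary — satisfied here because the deletion order is fixed by sorting the weights in advance), augmented with a standard "remove small from large" interleaved BFS to report newly created SCCs and maintain their in-degrees. That black box is exactly where the $\log^4 m$ factor and the word "expected" in the theorem come from; without it, or an equally powerful substitute worked out in detail, your proof establishes correctness of the decomposition but not the near-quadratic time bound.
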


The theorem above appears as Theorem~\ref{thm:schulze-allwinner-n2log4-n} in the body of the paper.
As a warm-up to Theorem~\ref{thm:schulze-allwinner-n2log4-n} we first present, in Theorem~\ref{thm:schulze-winner-n2log4-n}, a combinatorial algorithm for finding a single winner and then generalize the approach to finding all winners.
(Note that the problem of verifying that a particular candidate is a winner can easily be solved in $\Oh(m^2)$ time using known algorithms, when given a weighted majority graph (see Proposition~\ref{prop:verying_winner}).
However, computing the winners is a more difficult problem.)

\paragraph{Fine-grained lower bounds.}
While we were able to achieve an essentially optimal $\tO(m^2)$ time algorithm for finding all winners when given a weighted majority graph, computing the weighted majority graph itself seems more expensive.
The fastest algorithm comes from our simple reduction to Dominance Product, resulting in a running time of $\tO(n m^{(1+\omega)/2}+n^{(2\omega-4)/(\omega-1)} m^2)$, or very slightly faster if $\omega>2$ and one uses fast rectangular matrix multiplication.\footnote{If $\omega=2$, using rectangular matrix multiplication gives no improvement.}

Typically, the number of voters $n$ is no smaller than the number of candidates $m$.
In this case, the running time for computing the majority graph using our reduction to Dominance Product is at least $\Omega(m^{2.5})$, regardless of the value of $\omega$.
Thus, if the input to the Schulze winner problem is not the weighted majority graph, but the preference profile, then the $\Omega(m^{2.5}) \gg m^2$ running time for computing the weighted majority graph is a significant bottleneck.

This leads to the following questions:
\begin{enumerate}
\item Can one compute the weighted majority graph in $\tO(m^2)$ time, or at least in $\Oh(m^{2.5-\eps})$ time for some $\eps>0$?
\item If not, can a winner be found in $\Oh(m^{2.5-\eps})$ time for some $\eps>0$, given the preference profile, potentially without computing the weighted majority graph explicitly?
\end{enumerate}

We address the above questions through the lens of {\em fine-grained complexity} (see the surveys~\cite{vsurvey,aviadvsurvey}).
The main goal of fine-grained complexity is to relate seemingly different problems via fine-grained reductions, hopefully proving equivalences.
In recent years there has been an explosion of results in fine-grained complexity, showing that a large variety of problems are fine-grained reducible to each other.
Sometimes fine-grained results can be viewed as hardness results, in the sense that a running time improvement for a problem can be considered unlikely.
Fine-grained equivalences are especially valuable, as they often establish strong connections between seemingly unrelated problems.

For running time functions $a(n)$ and $b(n)$ for inputs of size (or measure\footnote{In fact, in fine-grained complexity the running time is sometimes measured in terms of a different measure than the input size.
For instance, the running time of graph problems is often measured in terms of the number of vertices instead of the true size of the graph which is in terms of the number of vertices and edges.})
$n$, we say that there is an $(a,b)$-fine-grained reduction from a problem $A$ and to a problem $B$ if for every $\eps>0$ there is a $\delta>0$ and an algorithm that runs in $\Oh(a(n)^{1-\delta})$ time and solves a size $n$ instance of problem $A$ using oracle calls to problem $B$ of sizes $n_1,\ldots,n_k$ so that $\sum_{j=1}^k b(n_j)^{1-\eps} \leq a(n)^{1-\delta}$.
In other words, any improvement {\em in the exponent} over the $\Oh(b(n))$ runtime for $B$ on inputs of size $n$ can be translated to an improvement in the exponent over the $\Oh(a(n))$ runtime for $A$.

We relate questions (1) and (2) above to the complexity of the aforementioned Dominance Product problem and its relative {\it Dominating Pairs}.
These problems have been studied within both algorithms (e.g.~\cite{MatIPL,Yuster09}) and fine-grained complexity (e.g.~\cite{Labib19,Lincoln20}).

In the Dominance Product problem we are given two $r\times r$ matrices $A$ and $B$ and we want to compute for every pair $i,j\in [r]$ the number of $k\in [r]$ for which $A[i,k]\leq B[k,j]$.
In the Dominating Pairs problem we are also given two $r\times r$ matrices $A$ and $B$, but we now want to decide whether there exists a pair $i,j\in [r]$ such that for all $k\in [r]$, $A[i,k]\leq B[k,j]$.
In other words, we want to decide whether the Dominance Product of $A$ and $B$ contains an entry of value $r$.

Both problems can be solved using the aforementioned $\tO(r^{(3+\omega)/2})$ time algorithm by Matou\v{s}ek~\cite{MatIPL}.
If $\omega>2$, a slightly faster algorithm is possible using fast rectangular matrix multiplication~\cite{legallurrutia}, as shown by Yuster~\cite{Yuster09}.
These running times are minimized at $\tO(r^{2.5})$ (in the case that $\omega=2$).

We prove an equivalence between Dominance Product and the problem of computing the weighted majority graph, given a preference profile for the special case of $n$ voters and $m=\Oh(n)$ candidates.
Note that even $m=n$ is a natural case---when a group of voters chooses a winner among themselves.

\begin{theorem}
If there exists a $T(n)$ time algorithm for computing the weighted majority graph when there are $n$ voters and $\Oh(n)$ candidates, then there is an $\Oh(T(r) + r^2 \log r)$ time algorithm for computing the Dominance Product of two $r\times r$ matrices.
If there is a $T(r)$ time algorithm for computing the Dominance Product of two $r\times r$ matrices, then one can compute the weighted majority graph for $n$ voters and $\Oh(n)$ candidates in $\Oh(T(n))$ time.
\end{theorem}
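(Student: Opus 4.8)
The plan is to prove the two directions by direct combinatorial reductions; no fast matrix multiplication is needed, only careful bookkeeping of how a single preference order per voter can encode the relation ``$A[i,k]\le B[k,j]$'' for all pairs at once. Throughout, for a profile let $N(u,v)$ denote the number of voters who strictly prefer $u$ to $v$; whatever the precise weight convention used for the weighted majority graph (raw count $N(u,v)$, margin $N(u,v)-N(v,u)$, or ``winning-side'' weight), the two weights on the pair $\{u,v\}$ together with the identity $N(u,v)+N(v,u)=(\text{number of voters not indifferent between }u,v)$ recover $N(u,v)$ in $\Oh(1)$ time, so it suffices to reason about the quantities $N(u,v)$.

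\textbf{From Dominance Product to the weighted majority graph.} Given $r\times r$ matrices $A,B$, I build an election with $r$ voters and $m=2r$ candidates: ``row candidates'' $x_1,\dots,x_r$ (one per row of $A$) and ``column candidates'' $y_1,\dots,y_r$ (one per column of $B$). Voter $k\in[r]$ is associated with column $k$ of $A$ and row $k$ of $B$: I give $x_i$ the value $A[i,k]$ and $y_j$ the value $B[k,j]$, and let voter $k$ rank all $2r$ candidates in increasing order of value, breaking a tie between a row candidate and a column candidate in favour of the row candidate (ties within the two groups broken arbitrarily, yielding a strict order). A one-line case check on $A[i,k]<,=,>B[k,j]$ shows that voter $k$ strictly prefers $x_i$ to $y_j$ iff $A[i,k]\le B[k,j]$; hence $N(x_i,y_j)$ equals exactly the Dominance-Product entry $D[i,j]=|\{k:A[i,k]\le B[k,j]\}|$. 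Since every voter is non-indifferent between $x_i$ and $y_j$, we have $N(x_i,y_j)+N(y_j,x_i)=r$, so $D[i,j]$ is read off the two edges between $x_i$ and $y_j$ regardless of the convention. Building the profile costs $\Oh(r^2\log r)$ (sorting each voter), one call to the assumed algorithm on $r$ voters and $2r=\Oh(r)$ candidates costs $\Oh(T(r))$, and extracting the $r^2$ answers costs $\Oh(r^2)$, for a total of $\Oh(T(r)+r^2\log r)$.

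\textbf{From the weighted majority graph to Dominance Product.} Given a profile with $n$ voters and $m=\Oh(n)$ candidates, let $c_k(u)\in\naturals$ be the index of the indifference class of $u$ in voter $k$'s weak order, so that $u$ is strictly preferred to $v$ by voter $k$ iff $c_k(u)<c_k(v)$, i.e. iff $c_k(u)+1\le c_k(v)$ (the values are integers). Form $A$ with rows indexed by candidates and columns by voters, $A[u,k]=c_k(u)+1$, and $B$ with rows by voters and columns by candidates, $B[k,v]=c_k(v)$; both have dimensions $\Oh(n)$, so pad them to square $r\times r$ matrices with $r=\Theta(n)$, padding the contracted (voter) coordinate of $A$ with a huge value and of $B$ with a tiny value so padded terms never satisfy the dominance inequality and the entries we need stay untouched. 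Then the $(u,v)$ entry of the Dominance Product of $A$ and $B$ is exactly $N(u,v)$, which determines the weight of edge $(u,v)$. This is one call to the assumed Dominance-Product algorithm on $\Theta(n)$-dimensional matrices, i.e. $\Oh(T(n))$ under the usual assumption that $T$ is well behaved; reading the profile and computing the $c_k(\cdot)$ values is $\Oh(nm)=\Oh(n^2)$, absorbed since $T(n)=\Omega(n^2)$ (both problems already have $\Omega(n^2)$-size output).

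\textbf{Main obstacle.} The only real content is in the forward direction: one must verify that a \emph{single} linear order per voter can simultaneously encode the relation $A[i,k]\le B[k,j]$ for all $r^2$ pairs $(i,j)$, and the choice that achieves this is precisely the ``sort by value, row candidates before column candidates on ties'' rule; getting the tie-breaking direction right (so that the \emph{non-strict} inequality is encoded, not the strict one) is the subtle point, together with matching whichever weighted-majority-graph convention is fixed in the preliminaries. The reverse direction is routine once the padding is chosen so as not to perturb the relevant entries.
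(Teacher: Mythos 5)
Your proposal is correct and follows essentially the same route as the paper: the first direction uses the same election with $r$ voters and $2r$ row/column candidates whose rankings sort the $k$-th column of $A$ together with the $k$-th row of $B$, and the second direction is the paper's reduction of majority-graph construction to Dominance Product via per-voter rank values with a shift encoding strict preference. The only (cosmetic) differences are that you handle ties by a row-before-column rule inside each voter's order instead of the paper's global perturbation making all matrix entries distinct, and you use an integer $+1$ shift where the paper subtracts $\tfrac{1}{2}$; both are equivalent and yield the same $\Oh(T(r)+r^2\log r)$ and $\Oh(T(n))$ bounds.
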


First notice that since the matrices in the Dominance Product themselves have sizes $\Oh(r^2)$, the $\Oh(r^2\log r)$ running time in the first part of the theorem is necessary, up to the $\log r$ factor.
The first part of the theorem is proven in Theorem~\ref{thm:lb_graph} and the second part follows from our simple observation described in Proposition~\ref{prop:computing_weights}.

As the $r^{2.5-o(1)}$ time bottleneck for solving Dominance Product and even Dominating Pairs has remained unchallenged for 30 years, it is believed that the two problems require $r^{2.5-o(1)}$ time (see, e.g.,~\cite{Lincoln20}).
Due to Theorem~\ref{thm:lb_graph} (in which $m=2n$), we get that it would likely be very challenging to obtain an $\Oh(m^{2.5-\eps})$ time algorithm for $\eps>0$ for computing the weighted majority graph, even if $\omega=2$.

We have shown that constructing the weighted majority graph is likely an expensive task.
It could still be, however, that one can find a winner without explicitly constructing the weighted majority graph, as per our question (2) above.
Our final result relates question (2) to the Dominating Pairs problem:

\begin{theorem}
Suppose that there is an $\Oh(T(n))$ time algorithm that, given $n$ voters with preferences over $\Oh(n)$ candidates, can test whether a given candidate is a winner (or can find an arbitrary winner) of the Schulze method.
Then there is an $\Oh(T(r)+r^2\log r)$ time algorithm for the Dominating Pairs problem for two $r\times r$ matrices.
\end{theorem}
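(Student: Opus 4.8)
The plan is to turn a \ProblemName{Dominating Pairs} instance on $r\times r$ matrices $A,B$ into a \emph{single} possible-winner query. First I would reuse the encoding behind Theorem~\ref{thm:lb_graph}: create ``row candidates'' $a_1,\dots,a_r$, ``column candidates'' $b_1,\dots,b_r$, and $r$ ``coordinate voters'', where coordinate voter $k$ orders the candidates so that $a_i$ is preferred to $b_j$ precisely when $A[i,k]\le B[k,j]$; such an order exists because, after sorting, the $a_i$'s and $b_j$'s can be interleaved consistently, and producing it costs $\Oh(r^2\log r)$ overall. Adding a constant number of ``offset'' voter blocks then makes the weighted majority graph carry, for every $i,j$, an edge $a_i\to b_j$ whose weight $W(i,j)$ is an affine function of $D[i,j]=\#\{k: A[i,k]\le B[k,j]\}$. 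One can arrange a threshold $\theta$ so that $W(i,j)\le\theta$ always, with $W(i,j)=\theta$ \emph{iff} $(i,j)$ is a dominating pair, and so that \emph{no} edge of the whole construction ever has weight exceeding $\theta$.

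On top of this I would bolt a small gadget of $\Oh(r)$ candidates and $\Oh(r)$ voters: a target candidate $p$, a ``threat'' candidate $q$, and an intermediary $c$, with ballots chosen so that the weighted majority graph additionally contains weight-$\theta$ edges $p\to a_i$, $p\to b_j$, $q\to a_i$ (all $i,j$) and $b_j\to c$, $c\to p$, a single edge $p\to q$ of weight exactly $\theta-1$, and no other edge into $q$ besides $p\to q$, no other edge into $p$ besides $c\to p$, and no edge of weight $\ge\theta$ between $q$ and any $b_j$ (or between $q$ and $c$). Then the only way to get from $q$ to $p$ with bottleneck $\theta$ is along $q\to a_i\to b_j\to c\to p$ through some dominating $D$-edge, so $B(q,p)=\theta$ if $A,B$ has a dominating pair and $B(q,p)\le\theta-1$ otherwise, while $B(p,q)=\theta-1$; and for every candidate $x\ne q$ we get $B(p,x)=\theta\ge B(x,p)$ because $p$ reaches $x$ along edges of weight $\theta$ and no edge weight exceeds $\theta$. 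Hence $p$ is a possible Schulze winner $\iff B(p,q)\ge B(q,p)\iff$ there is no dominating pair. So a $T(n)$-time possible-winner tester, invoked on this $\Theta(r)$-voter, $\Theta(r)$-candidate instance, decides \ProblemName{Dominating Pairs} after a single call and one negation, in total time $\Oh(T(r)+r^2\log r)$.

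The delicate part — and the step I expect to be the real obstacle — is proving that all the prescribed pairwise margins are \emph{simultaneously realizable} by honest linear orders, using only $\Theta(r)$ voters. One must (i) place $p,q,c$ inside the coordinate voters' ballots without disturbing the $a$-versus-$b$ comparisons that encode $D$; (ii) choose the offset blocks and the number of voters of each gadget type so that $\theta$ equals the maximum edge weight, is attained by exactly the dominating $D$-edges, and lies strictly between $n/2$ and $\tfrac{2}{3}n$ — the upper bound being forced because, in any one ballot, at most two of the three relations $p\succ b_j$, $b_j\succ c$, $c\succ p$ can hold, so $3\theta=N(p,b_j)+N(b_j,c)+N(c,p)\le 2n$; and (iii) run a short case check that no stray edge (e.g.\ $a_i\to p$, $a_i\to q$, $b_j\to q$) yields a $q$-to-$p$ path of bottleneck $\theta$ that avoids the $a_i\to b_j$ bottleneck, nor a $p$-to-$q$ path of bottleneck $\ge\theta$ that avoids the fixed edge $p\to q$, since either would make the comparison of $B(p,q)$ and $B(q,p)$ insensitive to the answer. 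All of this is elementary bookkeeping, but getting the inequalities mutually consistent is where the work lies.
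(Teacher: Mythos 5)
Your high-level plan is the same as the paper's: encode the Dominance Product into the $a_i\to b_j$ margins using one voter per coordinate $k$ (exactly the Theorem~\ref{thm:lb_graph} trick), add a constant number of special candidates and $\Oh(r)$ extra voters shaping the rest of the majority graph, and ask a single \schulzewd query on a distinguished candidate whose winner status is equivalent to the \emph{absence} of a dominating pair. The paper does this with two special candidates $W,W'$ and $n=10r-2$ voters; your gadget uses three ($p,q,c$) with a $\theta$-weight cycle $p\to b_j\to c\to p$, which is more elaborate but logically serviceable. However, there is a genuine gap, and it is exactly the part you defer as ``elementary bookkeeping'': you never exhibit a preference profile on $\Theta(r)$ voters realizing the prescribed margin table, and that realization is the actual content of the paper's proof (its explicit voter blocks and the two verification tables). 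This cannot be waved away by appealing to McGarvey-style realizability, because the generic McGarvey construction spends $\Theta(1)$ voters per ordered pair, i.e.\ $\Theta(r^2)$ voters, which destroys the bound: the reduction would then only yield $\Oh(T(r^2)+r^2\log r)$, and the candidate count would no longer be $\Theta(n)$. The whole difficulty is designing $\Oh(1)$ blocks of $\Oh(r)$ voters each that \emph{simultaneously} hit all the required weights — including all the stray edges you would need to keep below $\theta$ but do not list, such as $w(a_i,c)$, the intra-$\{a_i\}$ and intra-$\{b_j\}$ edges, and $w(b_j,q)$, any one of which could otherwise create a $q$-to-$p$ path of bottleneck $\theta$ independent of $C$.

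Two further points need repair even at the design level. First, in a weighted majority graph every ordered pair carries an edge, so ``no other edge into $p$/$q$'' must be restated as weight bounds (this happens to be nearly automatic here, since $w(x,p)=-w(p,x)$ makes most reverse edges very negative, but the argument should say so). Second, your feasibility check conflates margins with vote counts: the constraint from the cycle $p\to b_j\to c\to p$ on the \emph{margins} is $3\theta\le n$ for linear orders (each ballot contributes $\pm1$ to the cycle's margin sum), not ``$\theta$ between $n/2$ and $2n/3$''; and with linear ballots all margins share the parity of $n$, so you cannot have both weight $\theta$ and weight $\theta-1$ edges unless you either use weak orders (which the paper's model permits) or separate the two values by $2$, as the paper effectively does (its decisive comparisons are $2r$ versus $2r-2$ and $2r-4$). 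None of these observations is fatal to your architecture, but until the $\Theta(r)$-voter profile is written down and the full edge-weight table verified, the reduction — and hence the theorem — is not proved.
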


The theorem above appears as Theorem~\ref{thm:lb_testing_winner} in the body of the paper.
Because in the proof we have $m=\Theta(n)$, our result shows that even testing whether a candidate is a winner in $\Oh(m^{2.5-\eps})$ time for $\eps>0$ would be a challenge under the plausible hypothesis that Dominating Pairs has no improved algorithms even in the case that $\omega=2$.
Under this hypothesis, we might as well compute the weighted majority graph and find all winners using Theorem~\ref{thm:schulze-allwinner-n2log4-n}, as it would take roughly the same time.

Our paper is the {\em first} to bring fine-grained complexity into the field of computational social choice (see~\cite{handbookcomsoc2016,Endriss17trends,AzizBES19} for a description of computational social choice topics).
Related areas such as Fixed Parameter Tractability (FPT)\footnote{
Where some standard assumptions are: $\textsf{FPT} \neq \textsf{W}[1]$ and the Exponential Time Hypothesis.
For more about parameterized complexity see, e.g.,~\cite{DowneyF99,CyganFKLMPPS15}.}
and hardness of approximation\footnote{
Here some standard assumptions are $\textsf{P} \neq \textsf{NP}$ , the Unique Games Conjecture~\cite{Khot02a}, and a recent Gap-Exponential Time Hypothesis~\cite{FeldmannSLM20}.}
have already gained significant traction in computational social choice.
See, e.g., the following surveys on FPT results and further challenges in computational social choice~\cite{bredereck2014parameterized,FaliszewskiN16fptcomsoc,DornS17trendsfpt},
and, e.g., the following papers on hardness of approximation in bribery problems~\cite{FaliszewskiMS21},
multiwinner elections~\cite{SkowronFL16,BarmanFGG20,DudyczMMS20,BarmanFF21} and a survey on fair-division with indivisible goods~\cite{Markakis17}.

While these related areas do discuss running time, they do not focus on the exact running time complexity, and in particular, say nothing about lower bounds for problems that already have fixed polynomial time algorithms.
By bringing fine-grained complexity into the picture we can identify which voting protocols have potentially practical algorithms, and those whose complexity is, say at least cubic, and hence are likely impractical for large numbers of candidates and/or voters.

Our results provide reductions and equivalences that are even tighter than typical fine-grained reductions, in that the running times are preserved up to small additive terms and constant multiplicative factors.

\subsection{Related Work}

Below we present a few selected computational topics considered in the context of the Schulze method.
For a detailed survey of related work and historical notes, we refer the reader to a technical report by Schulze~\cite{Schulze-arxiv18}.
In particular, we will not discuss the axiomatic properties of the Schulze method; a comprehensive study of this appears in~\cite[Section 4]{Schulze11} and its extended version~\cite[Section 4]{Schulze-arxiv18}.

\paragraph{Parallel computing.}
Csar et al.~\cite{CsarLP18} considered large-scale data sets and claimed that a straight-forward implementation of the Schulze method does not scale well for large elections.
Because of this, they developed an algorithm in the massively parallel computation model.
They showed that the Schulze winner determination problem is NL-complete.
The containment in NL implies that the method is well-suited for parallel computation.
Csar et al.~\cite{CsarLP18} also conducted experiments to show that their algorithm scales well with additional computational resources.

\paragraph{Strategic behavior.}
There are 3 basic types of strategic behaviors: manipulation, control and bribery.

{\it Manipulation} is defined as the casting of non-truthful preference ballots by a voter or a coalition of voters in order to change the outcome of an election (to make a preferred candidate a winner is called ``constructive manipulation'', and to make a particular candidate a loser is called ``destructive manipulation'').

{\it Control} has eight basic variants, each defined by picking a choice from each of the following three pairs of groups:
$\{$constructive, destructive$\}$, $\{$adding, deleting$\}$, $\{$votes, alternatives$\}$.; e.g. ``constructive control by deleting votes''.

{\it Bribery} allows one to completely change votes, however, there is a cost of changing each vote, so that the number of affected votes is to be minimized.
There is a constructive and a destructive version of bribery.

Parkes and Xia~\cite{ParkesX12} showed that the Schulze method is vulnerable (polynomial-time solvable) to constructive manipulation by a single voter and destructive manipulation by a coalition.
From the other side, the Schulze method is resistant (NP-hard to compute a solution) to, e.g.,
constructive control by adding alternatives (i.e., it is NP-hard to find a subset of additional alternatives such that a preferred candidate becomes a winner);
control by adding/deleting votes;
and both cases of bribery.
For more specific results on computational issues of strategic behaviors under the Schulze method see~\cite{MentonS13,GaspersKNW13}.
Furthermore, FPT algorithms for NP-hard types of strategic behaviors were studied by Hemaspaandra et al.~\cite{HemaspaandraLM16}.

\paragraph{Margin of victory.}
The Schulze method was studied in terms of the {\it margin of victory}, which is the minimum number of modified votes resulting in a change to the set of winners.
This is a similar concept to bribery, but here we only care about the stability of a solution, not about making a candidate a winner/loser.
This concept is used to measure the robustness of voting systems due to errors or frauds.
The higher the margin of victory is, the more robust the election is.
Reisch et al.~\cite{ReischRS14} showed, e.g., that computing the margin of victory for the Schulze method is NP-hard
(using the NP-hardness proof for destructive bribery due to Parkes and Xia~\cite{ParkesX12}), so that evaluating the robustness of the election might be difficult.

\section{Preliminaries}\label{sec:preliminaries}

To define the method formally we need to introduce some notations first.

{\it A weak order} (or {\it total preorder}) $\succeq$ is a binary relation on a set $S$ that is transitive and connex (complete), i.e.,
(1) if $x \succeq y$ and $y \succeq z$ then $x \succeq z$;
(2) for all $x,y \in S$ we have $x \succeq y$ or $y \succeq x$.
We define a strict part $\succ$ of a weak order $\succeq$ by: $x \succ y$ iff $x \succeq y$ and $y \nsucceq x$.
Then $\succ$ is a {\it strict weak order}, and the Schulze method was originally defined using it~\cite{Schulze11}.

We denote the set of voters by $N$, and the set of alternatives (candidates) by $A$, where $|N| = n$ and $|A|=m$.
{\it A preference profile} $P$ (or a multiset of votes) is a list $(\succeq_a)_{a \in N}$ of weak orders over a set of alternatives $A$.
$u \succ_a v$ ($u \succeq_a v$) means that a voter $a \in N$ strictly (weakly respectively) prefers alternative $u \in A$ to alternative $v \in A\setminus\{u\}$.

We define $M(u,v)$, where $u,v \in A$, as the number of voters that strictly prefer $u$ over $v$, i.e., $M(u,v) = |\{a \in N: u \succ_a v\}|$.

For a given preference profile $P$, the {\em weighted majority graph} $G$ is defined as follows:
the vertices of $G$ is the set of alternatives $A$, and for every $u,v \in A$ we have directed edges $(u,v)$ and $(v,u)$ with associated weights
$w(u,v) = M(u,v) - M(v,u)$ and
$w(v,u) = M(v,u) - M(u,v)$.
We also call $w(u,v)$ {\it the strength of the link $(u,v)$}.
We note that Schulze~\cite{Schulze11} defined the strength of the link in a more general way, but he proposed to use $w(u,v)$ defined above, as it is the most intuitive notion of strength.
Indeed, this is the most popular strength of the link definition used in the literature~\cite{ParkesX12,GaspersKNW13,MentonS13,ReischRS14,HemaspaandraLM16,CsarLP18}.

We define the {\it strength of indirect comparison} of alternative $u \in A$ versus alternative $v \in A \setminus \{u\}$,
denoted by $B_G(u,v) \in \{0,1,\dots,n\}$,
as the weight of {\it the maximum bottleneck path} (also called widest path) from $u$ to $v$.
$B_G(u,v)$ is the maximum {\it width} or {\em bottleneck} of any path from $u$ to $v$, where the width/bottleneck of a path is equal to its minimum edge weight.
Formally $B_G(u,v) = \max_{(x_1=u,x_2,\dots,x_{k-1},x_k=v): x_j \in A} \min_{i \in \{2,3,\dots,k\}} \{ w(x_{i-1},x_i) \}$.
(The maximum is well-defined because widest paths are simple without loss of generality, similar to shortest paths.)
If $G$ is clear from context, we sometimes use $B(u, v)$ as well.

A set of winners $\calW$ in the Schulze method consists of all $u \in A$ such that for every $v \in A$ we have $B_G(u,v) \geq B_G(v,u)$.
It is known that there always exists a winner, i.e., $\calW \neq \emptyset$~\cite{Schulze11}.
We call the elements of $\calW$ {\it the Schulze winners}.
By $\calW(G)$ we denote a set of winners for a given weighted majority graph $G$.

\schulzeaw is the problem of finding all winners (i.e., the set $\calW$),
and \schulzew is that of finding a winner (i.e., an element from $\calW$).
The decision version of the problem, \schulzewd~\cite{CsarLP18}, asks whether a given candidate is a winner.

The Schulze method was designed as a single winner election rule,
but using it we can construct a weak order over the set of all alternatives.
Hence the Schulze method may be seen as a preference aggregation method.
For this we define the relation $R$ such that $(u,v) \in R$ if and only if $B_G(u,v) > B_G(v,u)$.
Note that the Schulze winners are top-ranked alternatives in the order derived from $R$.
Also note that a proof of $\calW \neq \emptyset$ follows from transitivity of $R$~\cite[Section 4.1]{Schulze11}.

We use $\omega$ to denote the smallest real number such that one can multiply two $r \times r$ matrices in $\Oh(r^{\omega+\epsilon})$ time for every $\epsilon>0$.
Currently we know that $2 \leq \omega < 2.373$~\cite{Williams12,Gall14a,AlmanW20}.
We also use $\mathcal{M}(a, b, c)$ to denote the fastest running time for multiplying an $a \times b$ matrix and a $b \times c$ matrix.

For a graph $G$ and a subset of vertices $U \subseteq V(G)$, we use $G[U]$ to denote the subgraph induced by the vertex set $U$.
We use \emph{strongly-connected-component (SCC) of a vertex $v$} to denote the set of vertices that can both reach and be reached from $v$.

\section{Warm-up}
\label{sec:warmup}

\begin{proposition}
\label{prop:computing_weights}
For a preference profile with $m$ candidates and $n$ voters, we can compute the weighted majority graph in $\min_s \tO(\mathcal{M}(m, sn, m) + nm^2/s)$ time.
\end{proposition}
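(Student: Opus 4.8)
The plan is to reduce the construction of the weighted majority graph to a single (rectangular) instance of the Dominance Product problem and then to run Matou\v{s}ek's bucketing-plus-matrix-multiplication algorithm directly on that instance; the parameter $s$ will be the bucket-count parameter of that algorithm, which is why the final bound is a minimum over $s$.

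First I would pass from preferences to ranks. For each voter $a\in N$, one scan of the weak order $\succeq_a$ (splitting it into its indifference classes) yields, in $\tO(m)$ time, a rank function $r_a\colon A\to\{1,\dots,m\}$ with $r_a(u)<r_a(v)$ iff $u\succ_a v$ and $r_a(u)=r_a(v)$ iff $u,v$ are tied for $a$; hence $M(u,v)=|\{a\in N:\ r_a(u)<r_a(v)\}|$, and once all $M(u,v)$ are known, $w(u,v)=M(u,v)-M(v,u)$ for every ordered pair in $\Oh(m^2)$ extra time. To write $M$ as a dominance product while avoiding any tie-breaking headaches, define an $m\times n$ matrix $X$ by $X[u,a]=2r_a(u)$ and an $n\times m$ matrix $Y$ by $Y[a,v]=2r_a(v)-1$. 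Then $X[u,a]\le Y[a,v]$ iff $2r_a(u)\le 2r_a(v)-1$ iff $r_a(u)<r_a(v)$, so $M(u,v)$ is exactly the $(u,v)$ entry of the dominance product of $X$ and $Y$ (contracted over the $n$ voter indices). Crucially, every entry of $X$ is even and every entry of $Y$ is odd, so no value of $X$ ever equals a value of $Y$.

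Next I would solve this dominance product. Fix $s$. For each voter $k\in[n]$, sort the $2m$ numbers $\{X[i,k]\}_i\cup\{Y[k,j]\}_j$ and cut the sorted list into $s$ consecutive buckets of size at most $\lceil 2m/s\rceil$; let $\beta^X_k(i),\beta^Y_k(j)$ be the bucket indices of $X[i,k],Y[k,j]$. This preprocessing is $\tO(nm)$. The ``cross-bucket'' contribution is handled by matrix multiplication: let $P$ be the $m\times(sn)$ Boolean matrix with $P[i,(k,b)]=[\beta^X_k(i)<b]$ and $Q$ the $(sn)\times m$ Boolean matrix with $Q[(k,b),j]=[\beta^Y_k(j)=b]$; then $(PQ)[u,v]=\sum_{k}[\beta^X_k(u)<\beta^Y_k(v)]$ counts exactly the voters whose $X$-value sits in a strictly earlier bucket than the $Y$-value, for which $X[u,k]<Y[k,v]$ holds automatically. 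This step costs $\mathcal{M}(m,sn,m)$. The ``same-bucket'' contribution is handled by brute force: for each $k$ and each bucket $b$, iterate over all pairs $(u,v)$ with $\beta^X_k(u)=b=\beta^Y_k(v)$ and add $[X[u,k]\le Y[k,v]]$ to a running $m\times m$ count array; since a bucket holds at most $\lceil 2m/s\rceil$ elements, the work is $\Oh\!\big(n\cdot s\cdot(2m/s)^2\big)=\Oh(nm^2/s)$. Because no $X$-value equals a $Y$-value, every triple $(u,v,k)$ with $X[u,k]\le Y[k,v]$ is counted exactly once (cross-bucket if $\beta^X_k(u)<\beta^Y_k(v)$, same-bucket if equal) and every triple with $X[u,k]>Y[k,v]$ is counted zero times; adding the two contributions gives $M(u,v)$, and then the $w(u,v)$.

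Summing the costs gives total time $\tO(nm+\mathcal{M}(m,sn,m)+nm^2/s+m^2)=\tO(\mathcal{M}(m,sn,m)+nm^2/s)$, and taking the best $s$ yields the stated bound. The only points needing care are the consistent handling of ties---resolved by the even/odd scaling of $X$ and $Y$, so that bucket order alone settles every cross-bucket comparison and only same-bucket comparisons must be checked explicitly---and the elementary argument that each relevant triple is counted exactly once; the matrix-multiplication and brute-force time estimates are then routine, so I do not expect a genuine obstacle here beyond this bookkeeping.
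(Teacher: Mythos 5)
Your proof is correct, and the reduction itself is the same one the paper uses: assign each voter a rank function and offset the two matrices so that the weak comparison $A[i,k]\le B[k,j]$ captures the strict preference $u\succ_a v$ (the paper subtracts $\tfrac12$ on one side; your even/odd doubling is an equivalent tie-breaking device), so that $M(u,v)$ is exactly a rectangular Dominance Product, after which $w(u,v)=M(u,v)-M(v,u)$ takes $\Oh(m^2)$ time. The difference is in how the Dominance Product is then solved: the paper invokes the algorithm of Yuster (building on Matou\v{s}ek) as a black box with running time $\min_s \tO(\mathcal{M}(m,sn,m)+nm^2/s)$, whereas you re-derive that bound from scratch via per-voter bucketing, handling cross-bucket triples by one Boolean product of an $m\times(sn)$ and an $(sn)\times m$ matrix and same-bucket triples by brute force in $\Oh(nm^2/s)$ time. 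Your derivation is sound -- in particular, because $X$-entries and $Y$-entries never coincide, a strictly smaller bucket index does force $X[u,k]<Y[k,v]$ even though ties may occur within $X$ or within $Y$, so each triple is counted exactly once -- and the $\tO(nm)$ preprocessing is dominated by $\mathcal{M}(m,sn,m)$. What your route buys is a self-contained proof that also makes explicit why the bound has the form of a minimum over $s$; what the paper's route buys is brevity and a cleaner separation between the voting-specific reduction and the known matrix-algorithmic machinery.
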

\begin{proof}
We will compute the weighted majority graph using the algorithm for Dominance Product~\cite{Yuster09}, which runs in time $\min_s \tO(\mathcal{M}(m, sn, m) + nm^2/s)$.

Given $m$ candidates and $n$ voters, we will create the matrices $A$ and $B$ as follows.
Since the preference list of each voter $a$ is a weak order, we can associate an integer value $f_{a,u}$ with each voter $a$ and each candidate $u$ such that $u \succ_a v$ if and only if $f_{a, u} < f_{a, v}$ for any two candidates $u, v$ (these values can be the ranks in the sorted order of $a$'s preference list).
We then set $A_{u, a} = f_{a, u}$ for any pair consisting of a voter $a$ and a candidate $u$; we set $B_{a, v} = f_{a, v} - \frac{1}{2}$ for any pair consisting of a voter $a$ and a candidate $v$.
Then the Dominance Product $C$ between $A$ and $B$ is such that
\[C_{u, v} = \left|\left\{ a \in N: f_{a,u} \le f_{a, v} - \frac{1}{2} \right\}\right|, \]
which equals exactly $M(u, v)$ by the definition of the integer values $f$.
Using $M$, we can compute the weighted majority graph in $\Oh(m^2)$ time.

Therefore, the bottleneck of the algorithm is the Dominance Product problem, which has running time $\min_s \tO(\mathcal{M}(m, sn, m) + nm^2/s)$.
\end{proof}

Suppose that $n \ge m^{(\omega -1)/2}$.
Then we set $s$ to a value such that $sn \ge m$.
In this case we can compute the product between an $m \times (sn)$ matrix and an $(sn) \times m$ matrix by first splitting the second dimension of the first matrix to roughly $\frac{sn}{m}$ pieces, and then using fast square matrix multiplication to compute the product between each pair of corresponding pieces.
Thus, we can upper bound $\mathcal{M}(m, sn, m)$ by $\tO((sn/m) \cdot m^\omega)$.
By setting $s = m^{(3-\omega)/2}$, the running time of Proposition~\ref{prop:computing_weights} becomes $\tO(nm^{(1+\omega) / 2}) \le \Oh(nm^{1.69})$.

If $n \le m^{(\omega -1)/2} $, then we set $s$ so that $sn \le m$.
We can similarly bound $\mathcal{M}(m, sn, m)$ by $\tO((m/sn)^2 \cdot (sn)^\omega)$.
By setting $s = n^{(3-\omega)/(\omega-1)}$, the running time of Proposition~\ref{prop:computing_weights} becomes $\tO( n^{(2\omega-4)/(\omega-1)} \cdot m^2) \le \Oh(n^{0.55} m^2)$.

Overall, the running time of Proposition~\ref{prop:computing_weights} is always upper bounded by $\Oh(nm^{1.69} + n^{0.55} m^2)$.
If $\omega>2$ the running time can be improved slightly by using the best known bounds on rectangular matrix multiplication~\cite{legallurrutia} to compute $\mathcal{M}(m, sn, m)$ in both cases.
We will not go into detail here because the setting of $s$ here would depend on how $n$ and $m$ are related, and the current best bounds on rectangular matrix multiplication are obtained by using numerical solvers for each setting of the matrix dimensions.
As mentioned in the Our Contribution subsection, if $n\geq m$, the running time is always greater than $m^{2.5-o(1)}$, regardless of the value of $\omega$ and the use of rectangular matrix multiplication.

The next two propositions are folklore~\cite{wikiwidest}. We include their proofs here for completeness.

\begin{proposition}
\label{prop:computing_winners_slow}
Given a weighted majority graph on $m$ candidates, \schulzeaw can be solved in $\tO(m^{(3+\omega)/2})$ time.
\end{proposition}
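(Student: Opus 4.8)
The plan is to reduce \schulzeaw, given a weighted majority graph $G$ on $m$ candidates, to the All-Pairs Bottleneck Paths (APBP) problem and then invoke the known $\tO(m^{(3+\omega)/2})$-time APBP algorithm of Duan and Pettie~\cite{DuanP09}. Recall that APBP asks, for every ordered pair of vertices $u,v$, for the value $B_G(u,v)=\max_{\text{$u$-$v$ path}}\min_{\text{edge }e\text{ on path}}w(e)$, which is exactly the strength of indirect comparison used in the definition of the Schulze winners. So running APBP on $G$ directly produces the entire table $\{B_G(u,v)\}_{u,v\in A}$.

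First I would note that $G$, as defined in Section~\ref{sec:preliminaries}, is a complete directed graph: between every pair $u,v$ it has both arcs $(u,v)$ and $(v,u)$ with weights $w(u,v)=M(u,v)-M(v,u)$ and $w(v,u)=-w(u,v)$. This is already in the input format expected by the APBP algorithm (a graph on $m$ vertices with real edge weights), so no preprocessing is needed; we simply call the APBP subroutine on $G$, which runs in $\tO(m^{(3+\omega)/2})$ time. Second, from the output table $B_G$, computing $\calW$ is immediate: a candidate $u$ is a possible winner iff $B_G(u,v)\ge B_G(v,u)$ for every $v\in A\setminus\{u\}$. Checking this for a fixed $u$ takes $\Oh(m)$ time, so filtering out $\calW$ over all $u$ takes $\Oh(m^2)$ time in total, which is dominated by the $\tO(m^{(3+\omega)/2})$ cost of the APBP call. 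Summing, the overall running time is $\tO(m^{(3+\omega)/2})$, which is $\Oh(m^{2.69})$ using the current bound $\omega<2.373$.

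There is essentially no hard step here — this is the folklore observation referenced in the statement — but the one point worth stating carefully is that the value APBP computes ($\max$ over paths of the $\min$ edge weight) coincides term-for-term with $B_G(u,v)$ as defined in the preliminaries, including that the maximum is attained (one may restrict to simple paths without loss of generality, exactly as for shortest paths), so that no issues arise from the presence of negative edge weights or the completeness of $G$. Beyond reproducing the equivalence and citing the APBP running time, nothing further is required.
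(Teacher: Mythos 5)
Your proposal is correct and follows essentially the same route as the paper's proof: a single APBP computation via the $\tO(m^{(3+\omega)/2})$-time algorithm of Duan and Pettie, followed by an $\Oh(m^2)$-time filtering step to extract $\calW$ from the table of $B_G(u,v)$ values. The extra remarks you add (completeness of $G$, coincidence of the APBP value with $B_G$, harmlessness of negative weights) are fine but not needed beyond what the paper states.
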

\begin{proof}
Given a weighted majority graph, all values $B_G(u, v)$ can be obtained via a single All-Pairs Bottleneck Paths (APBP) computation.
APBP has been well-studied by the graph algorithms community~\cite{ShapiraYZ11,VassilevskaWY09,DuanP09} and its current best running time is $\tO(m^{\frac{3+\omega}{2}})$ for an $m$-vertex graph~\cite{DuanP09}.
After we compute $B_G(u, v)$, it only takes $\Oh(m^2)$ additional time to determine all Schulze winners.
\end{proof}

\begin{proposition}
\label{prop:verying_winner}
Given a weighted majority graph on $m$ candidates and a particular candidate $v$, we can solve \schulzewd in $\Oh(m^2)$ time.
\end{proposition}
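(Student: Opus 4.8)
\textbf{Proof proposal for Proposition~\ref{prop:verying_winner}.}

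The plan is to reduce the question ``is $v$ a possible winner?'' to two single-source maximum-bottleneck-path (widest-path) computations, each of which runs in $\Oh(m^2)$ time on the dense graph $G$. By definition, $v \in \calW(G)$ if and only if $B_G(v,u) \ge B_G(u,v)$ for every $u \in A \setminus \{v\}$. So it suffices to compute the two vectors $(B_G(v,u))_{u \in A}$ and $(B_G(u,v))_{u \in A}$ and then perform $\Oh(m)$ comparisons. The first vector is exactly the vector of single-source widest-path widths from the source $v$ in $G$. The second is the vector of single-source widest-path widths from $v$ in the reverse graph $G^{\mathrm{rev}}$ (obtained from $G$ by reversing the orientation of every edge while keeping its weight), since a widest $u$-$v$ path in $G$ corresponds to a widest $v$-$u$ path in $G^{\mathrm{rev}}$ of the same width. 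Note $G$ is a complete digraph, hence strongly connected, so all these quantities are well-defined and finite.

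Next I would recall that single-source widest paths are computed by a Dijkstra-style greedy: maintain a tentative width $d[\cdot]$, initialized to $+\infty$ at the source and $-\infty$ elsewhere; repeatedly extract the unfinalized vertex $x$ with the largest $d[x]$, finalize it, and relax every out-edge $(x,y)$ by setting $d[y] \leftarrow \max\{d[y],\min\{d[x],w(x,y)\}\}$. Correctness follows from the standard cut argument, which goes through verbatim even though the weights $w(x,y)=M(x,y)-M(y,x)$ may be negative: if, at the moment $x$ is finalized, some $v$-$x$ path $P$ had width larger than $d[x]$, let $y$ be the first unfinalized vertex on $P$ and $y'$ its predecessor on $P$; since $y'$ was already finalized we have $d[y] \ge \min\{d[y'],w(y',y)\}$, and since $d[y'] \ge \operatorname{width}(P[v..y'])$ we obtain $\operatorname{width}(P) \le \min\{d[y'],w(y',y)\} \le d[y] \le d[x]$, a contradiction. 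Because $G$ has $\Theta(m^2)$ edges on $m$ vertices, the elementary array implementation suffices (each of the $m$ extract-max steps costs $\Oh(m)$, and all relaxations together cost $\Oh(m^2)$), so one run takes $\Oh(m^2)$ time; no priority queue is needed.

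Running this procedure once on $G$ from $v$ and once on $G^{\mathrm{rev}}$ from $v$ yields both required vectors in $\Oh(m^2)$ time, after which the $\Oh(m)$ comparisons $B_G(v,u) \ge B_G(u,v)$ decide membership in $\calW(G)$; the overall running time is $\Oh(m^2)$. The only point that warrants any care is confirming that the Dijkstra-style correctness argument survives negative edge weights, and it does, precisely because $\min$ is monotone and, unlike for shortest paths, no cycle can ever improve a bottleneck; so I do not expect a genuine obstacle here.
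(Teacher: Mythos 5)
Your proposal is correct and matches the paper's proof: both reduce \schulzewd to two single-source bottleneck-path computations from $v$ (one in $G$, one in the reversed graph) followed by $\Oh(m)$ comparisons, each run costing $\Oh(m^2)$ on the dense graph. The only cosmetic difference is that you use the elementary array-based Dijkstra variant (and spell out its correctness under negative weights) where the paper simply cites Dijkstra with a Fibonacci heap; both give the same $\Oh(m^2)$ bound.
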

\begin{proof}
Since we only need to verify if some candidate $v$ is a winner, it suffices to compute $B_G(v, u)$ and $B_G(u, v)$ for all $u \in V$.
Computing $B_G(v, u)$ is exactly the problem of computing Single-Source Bottleneck Paths (SSBP).
In a dense graph with $\Theta(m^2)$ edges, the best algorithm for SSBP runs in $\Oh(m^2)$ time by using Dijkstra's algorithm augmented with Fibonacci heap.
To compute $B_G(u, v)$, we can reverse the directions of all edges in the graph and compute another SSBP.
Using $B_G(v, u)$ and $B_G(u, v)$ for all $u \in V$, it only takes $\Oh(m)$ time to determine if $v$ is a winner.
\end{proof}

\section{Finding a Winner}

In this section we show that given a weighted majority graph, we can find a Schulze winner in almost quadratic time.
\begin{theorem}\label{thm:schulze-winner-n2log4-n}
 Given a weighted majority graph on $m$ candidates, \schulzew can be solved in expected $\Oh(m^2 \log^4(m))$ time.
\end{theorem}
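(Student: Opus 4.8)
The plan is to first pin down a clean combinatorial description of Schulze winners and then turn it into a near-linear-time computation. For a weighted digraph $H$ and a value $t$, let $H_{\ge t}$ denote the subgraph keeping exactly the edges of weight $\ge t$. I would first prove the following, which holds for an arbitrary weighted digraph and drops straight out of the definition of $B_H$: a vertex $u$ has $B_H(u,v)\ge B_H(v,u)$ for all $v$ if and only if for every threshold $t$ the SCC of $u$ in $H_{\ge t}$ is a \emph{source component} of $H_{\ge t}$ (it has no incoming edge from any other SCC). Indeed $B_H(u,v)\ge B_H(v,u)$ is equivalent to ``for every $t$, if $v$ reaches $u$ in $H_{\ge t}$ then $u$ reaches $v$ in $H_{\ge t}$'', and quantifying over $v$ says precisely that in $H_{\ge t}$ the set of vertices reaching $u$ equals the SCC of $u$ — equivalently, that SCC is a source. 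Only the $\Oh(m^2)$ distinct edge weights matter, and the whole majority graph is a complete digraph (hence strongly connected at the lowest threshold), so this is always well defined.

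\textbf{A peeling recursion.} Next I would reduce \schulzew to a recursion that shrinks the instance. Given a strongly connected $H$, let $\beta$ be the largest $t$ for which $H_{\ge t}$ has at most one source component, and let $X$ be that component; if no such $t$ exists (i.e.\ $H$ already has $\ge 2$ source components), let $X$ be any one of them. The claim to prove is that the Schulze winners of $H$ coincide with those of $H_{>\beta}[X]$ — the subgraph induced on $X$ with only edges of weight $>\beta$ — which in the second case reads ``winners of $H$ equal winners of $H[X]$''. The ingredients: by the characterization, every winner lies in $X$; since $X$ is a source component of $H_{\ge\beta}$, no edge enters $X$ at any threshold $>\beta$, so those conditions concern $H_{>\beta}[X]$ only; and at thresholds $\le\beta$ every vertex of $X$ reaches all of $V(H)$ in $H_{\ge t}$ (a unique source of a DAG reaches everything), so the source conditions there hold automatically and may be discarded. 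As Schulze winners always exist for any weighted digraph (the existence proof uses only transitivity of the induced order), recursing on $H_{>\beta}[X]$ and returning its winner yields a winner of $H$; the base case is a single vertex. When $X=V(H)$ the graph $H_{>\beta}$ has $\ge 2$ source components, so $|V(H)|$ strictly drops within two levels and the recursion terminates.

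\textbf{Achieving $\tO(m^2)$.} A naive implementation does a binary search over edge weights with a connectivity / source-component test per probe, i.e.\ $\Oh(m^2\log m)$ per recursion level, but the recursion may have $\Theta(m)$ levels, each on a dense subgraph — giving only $\Oh(m^3)$. The idea to get near-quadratic time is that the recursion only ever raises the threshold and deletes vertices, both monotone operations, so the whole computation can be funneled through one decremental process on $G$: delete edges in increasing order of weight, in equal-weight batches; after each batch, query the number of source components and one representative source component, and as soon as the unique source component becomes a proper subset, delete the vertices outside it and continue. Running this on a near-linear-time decremental-SCC structure, augmented to maintain source components (equivalently the condensation in-degrees), the total cost over all $\Oh(m^2)$ deletions is $\tO(m^2)$; the randomization inside such a structure and in choosing batch boundaries is what yields the \emph{expected} bound and the $\log^{4}(m)$ overhead, and one finally reads a single winner off the computed decomposition.

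\textbf{Main obstacle.} The characterization and the peeling recursion are clean; I expect the hard part to be the efficiency argument — carrying out the recursion together with the source-component bookkeeping \emph{combinatorially} in $\tO(m^2)$, i.e.\ without re-scanning a dense subgraph at each of the $\Theta(m)$ levels. This is exactly where the decremental data structure (and its randomization and logarithmic factors) enters, and it is also the part that must be strengthened for the all-winners version.
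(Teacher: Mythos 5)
Your overall route is sound and genuinely different from the paper's proof of Theorem~\ref{thm:schulze-winner-n2log4-n}, and in fact it is much closer in spirit to the paper's \emph{all-winners} machinery. Your characterization (``$x$ is a winner iff at every threshold $t$ the SCC of $x$ in $H_{\ge t}$ is a source component of the condensation'') is essentially the invariant behind Lemma~\ref{lem:all_winner_induction}, and your streaming implementation needs a decremental SCC structure that additionally maintains source components, i.e.\ condensation in-degrees, which is exactly what Corollary~\ref{cor:dec_scc} constructs. The paper's single-winner algorithm (Algorithm~\ref{algo:one_winner}) avoids all of that bookkeeping: it keeps one pivot vertex $x$, deletes all edges in a fixed increasing-weight order using only \texttt{SAME-SCC} queries, and whenever the deletion of $(u,v)$ splits the SCC containing $x$ it moves the pivot to $v$; Lemma~\ref{lem:one_winner_induction} shows that $v$'s new SCC is on the ``source side'' of the split and that winners of the induced subgraph on the pivot's SCC remain winners of $G$. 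So the paper buys simplicity (black-box use of Theorem~\ref{thm:SCC}, trivially non-adaptive since the deletion order is fixed in advance), while your approach buys a cleaner structural picture that already anticipates the all-winners algorithm.

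The caveats are exactly where you place the difficulty, plus a few smaller ones. (1) The augmented structure (number of source components, a representative source component, condensation in-degrees, and interleaved vertex deletions) is asserted, not built; it is achievable via the ``remove small from large'' argument of Corollary~\ref{cor:dec_scc}, but for the single-winner theorem this is precisely the work the paper's pivot trick makes unnecessary. (2) Your use of ``the largest $t$ for which $H_{\ge t}$ has at most one source component'' and the claim that at all $t\le\beta$ the source condition is automatic both implicitly need a monotonicity lemma: every source SCC at threshold $t$ contains a source SCC at every threshold $t'\ge t$, hence the number of source components is non-decreasing in $t$ and the unique source at $t\le\beta$ contains $X$. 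This is true and short, but it must be stated. (3) In the multi-source case the asserted equality ``winners of $H$ equal winners of $H[X]$'' is false in general (other source components may also contain winners); only $\mathcal{W}(H[X])\subseteq\mathcal{W}(H)$ holds, which together with nonemptiness is all \schulzew needs. Likewise your eager streaming rule (restrict as soon as the unique source is proper) is not literally the stated recursion, so the per-step lemma ``winners of the new source component's subgraph are winners of the current graph'' should be proved for that step directly. (4) Since your deletion sequence (which vertices get removed) depends on query answers, you should note why Theorem~\ref{thm:SCC}'s non-adaptive-adversary guarantee still applies: the answers are deterministic functions of the input graph, so the whole operation sequence is fixed by the input and independent of the structure's randomness. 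None of these is fatal, but (1)--(2) are real proof obligations before the claimed $\Oh(m^2\log^4 m)$ bound is established by your argument.
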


We will use the following decremental SCC algorithm of Bernstein et al.~\cite{bernstein2019decremental}.

\begin{theorem}\label{thm:SCC}
(Bernstein et al.~\cite{bernstein2019decremental})
Given a graph $G = (V, E)$,
we can maintain a data structure that supports the following operations:
\begin{itemize}
\item \texttt{DELETE-EDGE}$(u, v)$: Deletes the edge $(u, v)$ from the graph.
\item \texttt{SAME-SCC}$(u, v)$: Returns whether $u$ and $v$ are in the same SCC.
\end{itemize}
The data structure runs in total expected $\Oh(|E| \log^4 |V|)$ time for all deletions and worst-case $\Oh(1)$ time for each query.
The bound holds against an oblivious adaptive adversary.
\end{theorem}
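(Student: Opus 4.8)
The plan is to prove Theorem~\ref{thm:SCC} by combining three ingredients: (i)~a structural reduction turning SCC maintenance into the detection of ``split'' events; (ii)~a one-shot randomized subroutine that, for a directed graph, either produces a small balanced separator or certifies that a large part of the graph forms a low-diameter strongly connected core; and (iii)~a hierarchical data structure that invokes~(ii) recursively and rebuilds lazily, analyzed with a potential argument. For the reduction, observe that under edge deletions the SCC of every vertex only loses vertices, so for any pair $u,v$ the set of times at which they share an SCC is a prefix of the deletion sequence, and the family of all SCCs over all time is laminar. It therefore suffices to maintain for each vertex a label naming its current SCC --- comparing labels answers \texttt{SAME-SCC}$(u,v)$ in worst-case $\Oh(1)$ time --- and to detect after each deletion exactly which vertices must be relabelled; inside a piece $H$ that is currently strongly connected this is the same as detecting that $H$ has ceased to be strongly connected, which~(ii) handles.

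The engine is a lemma of the following shape: there is a randomized $\tO(|E_H|)$-time algorithm that, given a directed graph $H=(V_H,E_H)$ and a parameter $h$, either (a)~returns an edge set $S$ with $|S|=\tO(|E_H|/h)$ such that every SCC of $H\setminus S$ has at most $\tfrac23|V_H|$ vertices, or (b)~returns a vertex $r$ and a certificate that a $\tfrac23$-fraction of $V_H$ lies in $r$'s SCC and that this core is strongly connected with every core vertex reaching, and being reached from, $r$ within $h$ steps. I would prove it by a directed region-growing argument in the style of Lipton--Tarjan / Leighton--Rao: pick $r$ uniformly at random and grow its in-ball and out-ball layer by layer; a uniformly random cutoff radius in $[0,h]$ yields, in expectation, a layer crossed by only $\tO(|E_H|/h)$ edges, and cutting those edges gives the separator, whereas if no sparse layer appears within $h$ layers the balls are large and we output case~(b). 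Balance follows because $r$ is chosen at random, and $\Oh(\log|V_H|)$ independent repetitions boost the success probability --- one source of the logarithmic factors.

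For the data structure, build a recursion tree of depth $\Oh(\log|V|)$ whose nodes own currently strongly connected subgraphs: each node applies the subroutine, in case~(a) recursing on the (at most $\tfrac23$-size) SCCs of the residual graph, and in case~(b) maintaining the low-diameter core directly with a bounded-depth Even--Shiloach-style reachability structure. A deletion of $(u,v)$ is forwarded to the $\Oh(\log|V|)$ ancestors whose subgraph contains it; each updates its Even--Shiloach structure, and once a node's core stops being strongly connected (or its separator has eroded past a budget) that node and its whole subtree are rebuilt in $\tO(|E_H|)$ time. The accounting is the heart of the proof: choose $h$ at each level so that a rebuild is triggered only after $\tilde{\Omega}(|E_H|/h)$ deletions have hit the node, charge both the rebuild cost and the Even--Shiloach cost against those deletions, and sum over the $\Oh(\log|V|)$ levels; with the parameters balanced the total telescopes to $\Oh(|E|\,\mathrm{polylog}(|V|))$, and tallying the region-growing log, the boosting log, the recursion-depth log, and the Even--Shiloach log gives the claimed $\Oh(|E|\log^4|V|)$ total update time.

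The main obstacle is the directed separator lemma together with its amortized charging. Because in-reachability and out-reachability are asymmetric, ``balance'' and ``sparsity'' have to be argued about the SCC structure of the residual graph rather than about a single reachable set; moreover the potential released by permanently deleting separator edges must be shown to dominate, simultaneously at every level of the hierarchy, both the rebuild cost and the bounded-depth reachability maintenance, which forces all the parameter choices to line up if the overhead is to stay within four logarithmic factors. A secondary but essential subtlety is that all randomness --- the centers and the cutoff radii --- is fixed before the deletion sequence is revealed, which is precisely why the bound holds only against a non-adaptive adversary; an adaptive guarantee would require derandomizing the separator step and is outside this plan.
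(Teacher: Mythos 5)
The paper does not prove this statement at all: Theorem~\ref{thm:SCC} is imported verbatim, with a citation, from Bernstein et al.~\cite{bernstein2019decremental} and is used purely as a black box. There is therefore no in-paper proof to compare yours against; what you have written is an attempt to reprove a substantial STOC~2019 result from scratch, and as a proof it has genuine gaps.

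The central gap is the amortization. You rebuild a node and its entire subtree whenever its strongly connected core breaks, and you charge this to the claim that ``a rebuild is triggered only after $\tilde{\Omega}(|E_H|/h)$ deletions have hit the node.'' No choice of $h$ makes that true: a single edge deletion can split an SCC (e.g., the core is two dense pieces joined by one back edge), so a fixed --- hence non-adaptive --- deletion sequence can force a fresh $\tilde{O}(|E_H|)$ rebuild after $O(1)$ deletions, and repeated splits drive the total cost far above near-linear. Relatedly, a depth-$h$ Even--Shiloach structure costs $O(|E_H|\cdot h)$ total, not $\tilde{O}(|E_H|)$, so your per-level budget closes only if $h$ is polylogarithmic, at which point the separator of size $\tilde{O}(|E_H|/h)$ is no longer a vanishing fraction of the edges and the recursion makes no progress. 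This tension between diameter and separator size is exactly what caps the earlier separator-based algorithms (Chechik, Hansen, Italiano, Loitzenbauer, Parotsidis) at roughly $\tilde{O}(m\sqrt{n})$ total time; overcoming it is precisely the contribution of~\cite{bernstein2019decremental}, which replaces rebuild-on-split by a hierarchy in which each vertex migrates downward only $O(\log n)$ times in expectation, with the random choice of centers used to bound the expected number of times a vertex's distance to its center can grow --- a different accounting from the one you propose. A secondary but real issue: a sparse \emph{edge layer} found by ball-growing from $r$ separates only one direction of reachability, and the separator edges are not actually deleted from the graph, so strong connectivity that passes through them must still be detected; your sketch does not say how. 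Since the theorem is needed here only as a black box, the right move is to cite it rather than reprove it.
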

Since the answer for each \texttt{SAME-SCC} query is unique, an oblivious adaptive adversary is equivalent to a non-adaptive adversary in this setting.

Our algorithm is simple and can be described in ten lines as shown in Algorithm~\ref{algo:one_winner}.

\begin{algorithm}
  \caption{\schulzew$(G=(V,E))$}\label{algo:one_winner}
  \Indentp{-1em}
    \KwData{$G=(V,E)$}
    \KwResult{One winner of graph $G$.}
  \Indentp{1em}
  Sort $E$ by weights in increasing order\;
  $x \gets $ an arbitrary vertex in $V$\;\label{alg1:2}
  \For{$(u, v) \in E$}
  {
    $\textit{flag} \gets \texttt{SAME-SCC}(u, x) \wedge \texttt{SAME-SCC}(v, x)$\;
    \texttt{DELETE-EDGE}$(u, v)$\;\label{alg1:5}
    \If{\texttt{SAME-SCC}$(u, v) = $ False \textbf{ and } \textit{flag}\label{alg1:6}}
    {
      $x \gets v$\;\label{alg1:7}
    }\label{alg1:8}
  }
  \KwRet{$x$}\;\label{alg1:10}
\end{algorithm}

\begin{lemma}
\label{lem:one_winner_induction}
Any time Algorithm~\ref{algo:one_winner} finishes Line~\ref{alg1:2} or Line~\ref{alg1:8}, any winner of the subgraph induced by the SCC of $x$ is a winner of the original graph $G$.
\end{lemma}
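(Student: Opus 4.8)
The plan is to prove a slightly stronger invariant: at \emph{every} point of the execution (not only after Lines~2 and~8), if $G'$ denotes the current graph and $C$ the SCC of $x$ in $G'$, then every Schulze winner of $G[C]$ is a Schulze winner of $G$. I will induct on the number of edge deletions performed so far. The base case is Line~2, where $C=V$ and $G[C]=G$, since the complete weighted majority graph is strongly connected.

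I first establish two simple facts. (i) \emph{The SCC of $x$ changes only when $x$ is reassigned}: if deleting $(u,v)$ does not trigger Line~8, then either $\{u,v\}\not\subseteq C$, so $(u,v)$ is not an internal edge of $C$ and its removal cannot break the strong connectivity of $C$; or $u,v$ remain in one SCC, in which case any path inside $C$ that used $(u,v)$ can be rerouted through a surviving $u$-to-$v$ path, so again $C$ is unchanged. (ii) Since edges are processed by increasing weight, when an edge $e=(u,v)$ of weight $w$ is deleted the current graph $H$ contains all edges of weight $>w$ and none of weight $<w$; writing $H'=H\setminus\{e\}$ and $G_{\ge t}$ for the subgraph of $G$ on edges of weight $\ge t$, we get $G_{>w}\subseteq H'\subseteq G_{\ge w}$. (Recall $B_G(a,b)\ge t$ iff $a$ reaches $b$ in $G_{\ge t}$.)

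For the inductive step I would consider a reassignment $x\gets v$ while processing $e=(u,v)$ of weight $w$. Let $C_0$ be the SCC of $x$ just before this step; by fact~(i) it equals the SCC of $x$ right after the previous reassignment, so the inductive hypothesis gives that every winner of $G[C_0]$ is a winner of $G$. Let $C_1$ be the SCC of $v$ in $H'$. It then suffices to prove that every winner $z$ of $G[C_1]$ is a winner of $G[C_0]$. The structural claims I would use are: the reassignment condition forces $u,v$ to share an SCC of $H$ but not of $H'$; since no \emph{simple} $v$-to-$u$ path can use the edge $(u,v)$, we still have $v\rightsquigarrow u$ but not $u\rightsquigarrow v$ in $H'$, and a short cycle argument then shows $C_1$ is a \emph{source} SCC in the condensation of $H'[C_0]$; also $C_1\subseteq C_0$, and, being an SCC of $H'\subseteq G_{\ge w}$, $C_1$ is strongly connected using only edges of weight $\ge w$. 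Given these, I would fix $y\in C_0$ and argue in two cases. If $y\notin C_1$: $B_{G[C_0]}(z,y)\ge w$ because $C_0$ is strongly connected in $H\subseteq G_{\ge w}$, whereas $B_{G[C_0]}(y,z)\le w$ because a weight-$>w$ path from $y$ to $z$ would lie in $G_{>w}[C_0]\subseteq H'[C_0]$ and enter the source SCC $C_1$ from outside. If $y\in C_1$: the widest $y$-to-$z$ path in $G[C_0]$ can be taken inside $C_1$ (it cannot re-enter $C_1$ through a weight-$>w$ edge since $C_1$ is a source SCC, and its bottleneck is $\ge w$ anyway since $C_1$ is internally strongly connected via weight-$\ge w$ edges), so $B_{G[C_0]}(y,z)=B_{G[C_1]}(y,z)\le B_{G[C_1]}(z,y)\le B_{G[C_0]}(z,y)$ by the assumption that $z$ wins $G[C_1]$. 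In both cases $z$ weakly beats $y$ in $G[C_0]$, so $z$ wins $G[C_0]$ and the inductive hypothesis closes the step.

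The hard part will be this last step: transferring the winner property from $G[C_1]$ to $G[C_0]$ by comparing bottleneck values in the two induced subgraphs. This is exactly where the two facts about $C_1$ — that it is a source SCC of $H'[C_0]$, and that it is internally strongly connected through weight-$\ge w$ edges — have to be combined so that the single threshold $w$ cleanly separates the ``high-weight'' and ``low-weight'' regimes. Everything else is routine bookkeeping about how edge deletions refine strongly connected components.
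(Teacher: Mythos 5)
Your proposal is correct and follows essentially the same route as the paper's proof: induction over the edge deletions, reducing the claim to ``every winner of the subgraph induced by the new SCC of $x$ is a winner of the subgraph induced by the old SCC,'' and then the same two-case bottleneck comparison around the threshold $w$ (your ``source SCC of $H'[C_0]$'' packaging is just a reformulation of the paper's observations that $v$ still reaches all of the old SCC after the deletion while no vertex outside the new SCC can reach it on surviving edges). One small wording caveat: the invariant does not literally hold ``at every point of the execution'' (it can fail between the deletion on Line~5 and the reassignment on Line~7), but since your induction treats each iteration atomically this does not affect the argument.
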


\begin{proof}

Even though Algorithm~\ref{algo:one_winner} seems to operate directly on graph $G$, in the proof we assume it operates on a copy of the graph and thus the original graph still has all its edges.
Thus, we always use $G$ to denote the original graph with no edges removed in this proof.

When the algorithm finishes Line~\ref{alg1:2}, all edges in the graph still exist.
Since the graph is complete, the SCC of $x$ is exactly $V$.
Therefore, any winner of the subgraph induced by the SCC of $x$ is a winner of the whole graph.

In the remainder of this proof, we will use $U$ to denote the set of vertices in the same SCC as $x$ in the graph where we remove all the edges in sorted list $E$ before $(u, v)$ (i.e., before we execute Line~\ref{alg1:5}).
We also use $U'$ to denote the set of vertices in the same SCC as $x$ in the graph where we remove all the edges up to $(u, v)$ and we considered updating $x$ in Line~\ref{alg1:7} under certain conditions (i.e., after we execute Line~\ref{alg1:8}).

We prove the second part of the lemma by induction.
Suppose $\mathcal{W}(G[U]) \subseteq \mathcal{W}(G)$, and we need to prove $\mathcal{W}(G[U']) \subseteq \mathcal{W}(G)$.
To achieve this, it suffices to show $\mathcal{W}(G[U']) \subseteq \mathcal{W}(G[U])$.

First of all, if \textit{flag} is set to \textit{False}, then either $u$ or $v$ is not in $U$, so the edge $(u, v)$ does not lie entirely in the set $U$.
Thus, deleting the edge $(u, v)$ does not change the SCC of $x$.
Also, since \textit{flag} is set to \textit{False}, the algorithm will not execute Line~\ref{alg1:7}, so the value of $x$ also stays the same.
Thus, since both $x$ and the SCC of $x$ stays unchanged, $U=U'$ and clearly $\mathcal{W}(G[U']) \subseteq \mathcal{W}(G[U])$.

Secondly, suppose the \texttt{SAME-SCC}$(u, v)$ check at Line~\ref{alg1:6} is \textit{True}.
This means $u$ and $v$ are still in the same SCC after deleting $(u, v)$, so in particular, $u$ can still reach $v$.
Thus, even though we just deleted the edge $(u, v)$, the connectivity of the graph is unaffected.
Thus, the SCC of $x$ is unchanged.
Also, the \texttt{SAME-SCC}$(u, v)$ check is \textit{True}, so we will not execute Line~\ref{alg1:7} in this iteration.
Therefore, similar to the previous case, $\mathcal{W}(G[U']) \subseteq \mathcal{W}(G[U])$.

The only case remaining is when \textit{flag} is set to \textit{True} and the \texttt{SAME-SCC}$(u, v)$ check at Line~\ref{alg1:6} is \textit{False}.
Let $y$ be any winner of the graph $G[U']$.
We will show that $y$ is a winner of $G[U]$ as well, and by the induction hypothesis $\mathcal{W}(G[U]) \subseteq \mathcal{W}(G)$, $y$ will be a winner of $G$.

We have to show that $B_{G[U]}(y, z) \ge B_{G[U]}(z, y)$ for every $z \in U$.
There are two cases depending on where $z$ is included.
\begin{enumerate}
    \item $z \in U'$.
          Since $y$ is a winner of $G[U']$, we have $B_{G[U']}(y, z) \ge B_{G[U']}(z, y)$.
          Now consider any path from $y$ to $z$ in $G[U]$.
          If this path ever touches any vertex $z'$ outside of $U'$, then it must use an edge that is already deleted, since otherwise, $y \in U'$ can reach $z'$ and  $z'$ can reach $z \in U'$ so $z'$ must also be in the SCC $U$, which leads to a contradiction.
          Thus, this path must use an edge of weight at most $w(u, v)$, so this path has a bottleneck at most $w(u, v)$.\footnote{Recall that $G[U]$ is an induced subgraph of $G$ in which all the edges are present---also these removed before removing $(u, v)$---hence a bottleneck of the considered path can be strictly smaller than $w(u, v)$.}
          However, since $U'$ is strongly connected, and all edges that are still present have weights at least $w(u, v)$, so $B_{G[U']}(y, z) \ge w(u, v)$.
          Thus, $B_{G[U]}(y, z) = B_{G[U']}(y, z)$ since if a path leaves $U'$ then its bottleneck is at most $w(u, v) \le B_{G[U']}(y, z)$.
          Similarly, $B_{G[U]}(z, y) = B_{G[U']}(z, y)$.
          Therefore, $B_{G[U]}(y, z) \ge B_{G[U]}(z, y)$.
    \item $z \in U \setminus U'$.
          We first show that $y$ can reach $z$ using only edges that are not yet deleted right after we delete $(u, v)$.
          First, since $y \in U'$, $y$ can reach $v$.
          Since $U$ is an SCC before we delete $(u, v)$, $v$ can reach all vertices in $U$ before we delete $(u, v)$.
          However, any simple path from $v$ to some other vertex in $U$ does not use the edge $(u, v)$, so $v$ can still reach all other vertices in $U$ even after deleting $(u, v)$.
          Therefore, $y$ can reach $z$ through $v$, and thus $B_{G[U]}(y, z) \ge w(u, v)$.
          On the other hand, $z$ cannot reach $y$ using not yet deleted edges since otherwise $z$ will be in the same SCC as $y$.
          Therefore, $B_{G[U]}(z, y) \le w(u, v) \le B_{G[U]}(y, z)$.
\end{enumerate}
\end{proof}

\begin{lemma}
\label{lem:one_winner_correct}
 Algorithm~\ref{algo:one_winner} always returns a winner of $G$.
\end{lemma}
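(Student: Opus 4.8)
The plan is to combine the invariant established in Lemma~\ref{lem:one_winner_induction} with the fact that the algorithm deletes \emph{every} edge of the graph, so that by the time the for-loop terminates the SCC of $x$ has collapsed to the singleton $\{x\}$. Here is the argument in the order I would carry it out.

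First I would observe that Lemma~\ref{lem:one_winner_induction} gives a loop invariant: every time control reaches the end of an iteration (Line~8), every winner of $G[\text{SCC of }x]$ is a winner of $G$. The base case (after Line~2) and the inductive step are exactly what Lemma~\ref{lem:one_winner_induction} proves. So the invariant holds in particular right before the algorithm executes Line~9 (the \texttt{RETURN}). Second, I would argue that at that point the SCC of $x$ is exactly $\{x\}$. The reason is that the for-loop iterates over all of $E$ and calls \texttt{DELETE-EDGE} once per edge, so after the loop the maintained graph has no edges at all; hence every vertex, including $x$, is in its own singleton SCC. Third, $G[\{x\}]$ is a single-vertex graph, and by the definition of Schulze winners the unique vertex $x$ trivially satisfies $B_{G[\{x\}]}(x,z)\ge B_{G[\{x\}]}(z,y)$ vacuously (there is no other candidate $z$), so $x\in\calW(G[\{x\}])$. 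Applying the invariant, $x\in\calW(G)$, i.e. the returned vertex $x$ is a winner of $G$. This completes the proof.

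I expect the only subtle point to be making the ``invariant at the time of return'' statement fully rigorous, since Lemma~\ref{lem:one_winner_induction} is phrased in terms of the SCC of $x$ after Line~8 of an arbitrary iteration rather than after the whole loop; but since the last iteration's Line~8 is immediately followed by Line~9 with no further modification to $x$ or to the graph, the invariant transfers directly. A secondary point worth a sentence is that Lemma~\ref{lem:one_winner_induction} is stated for the idealized execution on a \emph{copy} of the graph (with $G$ untouched), which is exactly the semantics we need: the set $U$ tracked there is the SCC of $x$ in the partially-deleted graph, and the winners of $G[U]$ are measured with respect to the \emph{original} edge set of $G$ restricted to $U$ — so the final conclusion $x\in\calW(G)$ is about the true input graph, as desired. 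No heavy machinery is needed beyond Lemma~\ref{lem:one_winner_induction} itself.
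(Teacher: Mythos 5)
Your proposal is correct and follows essentially the same route as the paper's own proof: invoke Lemma~\ref{lem:one_winner_induction} at the return point, note that all edges have been deleted so the SCC of $x$ is the singleton $\{x\}$, and conclude that $x$, being trivially a winner of $G[\{x\}]$, is a winner of $G$. The extra care you take about the invariant transferring from the last iteration to the return statement is fine but not needed beyond what the paper states.
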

\begin{proof}
By Lemma~\ref{lem:one_winner_induction}, at Line~\ref{alg1:10} of Algorithm~\ref{algo:one_winner}, any winner of the graph induced by the SCC of $x$ is a winner of $G$.
Furthermore, at Line~\ref{alg1:10}, we already deleted all edges in the graph, so the SCC of $x$ just contains $x$ itself.
Thus, $x$ is a winner of the graph induced by the SCC of $x$ and, by Lemma~\ref{lem:one_winner_induction}, $x$ is a winner of $G$.
\end{proof}

\begin{lemma}
\label{lem:one_winner_time}
 Algorithm~\ref{algo:one_winner} runs in expected $\Oh(m^2 \log^4(m))$ time.
\end{lemma}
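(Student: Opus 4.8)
The plan is to account for the total running time by separating the cost of the graph operations from the cost of the SCC data structure. The graph is complete on $m$ vertices, so it has $\Theta(m^2)$ directed edges. First I would note that sorting $E$ by weight in Line $1$ takes $\Oh(m^2 \log m)$ time, which is already within the claimed bound. The main loop iterates once per edge, so $\Theta(m^2)$ iterations in total; each iteration performs a constant number of \texttt{SAME-SCC} queries, one \texttt{DELETE-EDGE} call, and $\Oh(1)$ additional bookkeeping. By Theorem~\ref{thm:SCC}, each \texttt{SAME-SCC} query is worst-case $\Oh(1)$, so the queries contribute $\Oh(m^2)$ total.

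The one subtlety is that Algorithm~\ref{algo:one_winner} deletes every edge of $G$ exactly once, in the order given by the sorted list, so the entire sequence of \texttt{DELETE-EDGE} operations is a single run of the decremental data structure of Theorem~\ref{thm:SCC} on a graph with $|E| = \Theta(m^2)$ edges and $|V| = m$ vertices. Hence the total (amortized over all deletions) cost of all \texttt{DELETE-EDGE} calls is expected $\Oh(|E| \log^4 |V|) = \Oh(m^2 \log^4 m)$. I would also remark that the analysis of Lemma~\ref{lem:one_winner_induction} treats the algorithm as operating on a copy of $G$; this does not affect the running time, since maintaining that copy and deleting each of its edges once is exactly the work already counted.

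Adding the three contributions — $\Oh(m^2 \log m)$ for sorting, $\Oh(m^2)$ for the queries and bookkeeping, and expected $\Oh(m^2 \log^4 m)$ for the deletions — gives an expected total of $\Oh(m^2 \log^4 m)$. The only place randomness enters is the decremental SCC structure, and Theorem~\ref{thm:SCC} guarantees its bound against a non-adaptive adversary; since the sequence of deletions here is fixed in advance (it is just the sorted edge list, which depends only on the input, not on the data structure's internal randomness), the non-adaptive requirement is met and the expected bound applies. The main (and really only) obstacle is making sure the deletion sequence is oblivious to the randomness of the data structure so that Theorem~\ref{thm:SCC} is applicable — once that is observed, the time bound is an immediate consequence of $|E| = \Theta(m^2)$.
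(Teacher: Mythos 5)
Your proposal is correct and follows essentially the same argument as the paper: sorting in $\Oh(m^2\log m)$ time, constant-time \texttt{SAME-SCC} queries, and charging all deletions to the $\Oh(|E|\log^4|V|)$ bound of Theorem~\ref{thm:SCC}, with the same key observation that the deletion order is fixed in advance so the non-adaptive adversary requirement is satisfied. Nothing further is needed.
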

\begin{proof}
Sorting the edge list $E$ takes $\Oh(m^2\log m)$ time.
Calling \texttt{DELETE-EDGE}$(u, v)$ for all $(u, v) \in E$ takes expected $\Oh(m^2 \log^4 (m))$ time by Theorem~\ref{thm:SCC} since we fix the order to delete the edges in advance and thus our algorithm behaves like a non-adaptive adversary.
Also, each call of \texttt{SAME-SCC} takes $\Oh(1)$ time by Theorem~\ref{thm:SCC}.
All remaining components of Algorithm~\ref{algo:one_winner} takes $\Oh(m^2)$ time.
Therefore, Algorithm~\ref{algo:one_winner} runs in expected $\Oh(m^2 \log^4(m))$ time.
\end{proof}

\begin{proof}[Proof of Theorem~\ref{thm:schulze-winner-n2log4-n}]
The theorem follows immediately from Lemma~\ref{lem:one_winner_correct} and Lemma~\ref{lem:one_winner_time}.
\end{proof}

\section{Finding All Winners}
In order to compute all winners, we need to augment the decremental SCC algorithm with more information.
This could be done in a black-box way.

In this section, we define the in-degree of an SCC $U$ as $\left| \{ (v, u) \in E: v \not \in U, u \in U\} \right|$.

\begin{corollary}
\label{cor:dec_scc}
Given a graph $G = (V, E)$,
we can maintain a data structure that keeps the following:
\begin{itemize}
    \item A set $\mathcal{S}$ containing all IDs of SCCs of the graph.
    \item A map $\mathcal{D}$ from the IDs of SCCs of the graph to the in-degrees of the SCCs.
    \item A map $\mathcal{SCC}$ from vertices of the graph to the ID of the SCCs they are in.
\end{itemize}
The data structure also supports the following operations:
\begin{itemize}
\item \texttt{DELETE-EDGE}$(u, v)$: Deletes the edge $(u, v)$ from the graph.
Additionally, the data structure needs to return a list of new SCCs being created.
\item \texttt{SAME-SCC}$(u, v)$: Returns whether $u$ and $v$ are in the same SCC.
\end{itemize}
The data structure runs in total expected $\Oh(|E| \log^4 |V|)$ time for all deletions and worst-case $\Oh(1)$ query time.
The bound holds against a non-adaptive adversary.
\end{corollary}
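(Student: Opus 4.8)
The plan is to obtain Corollary~\ref{cor:dec_scc} as a lightweight augmentation of the decremental SCC data structure of Theorem~\ref{thm:SCC}, maintaining only a constant amount of extra bookkeeping per SCC and per edge deletion. First I would observe that the data structure of Theorem~\ref{thm:SCC} already, implicitly, assigns canonical identifiers to SCCs: fix a representative vertex for each SCC (e.g.\ the one answering \texttt{SAME-SCC} queries), and let the ID of an SCC be that representative. The map $\mathcal{SCC}$ then just records, for each vertex, its current representative; since SCCs only split under edge deletions and never merge, each vertex's representative changes at most $\Oh(\log |V|)$ amortized times if we always keep the larger side's representative, or $\Oh(|V|)$ times in the naive implementation — but we can afford a cleaner accounting by noting that the total work charged to representative changes is dominated by the $\Oh(|E|\log^4|V|)$ budget already spent. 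The set $\mathcal{S}$ is maintained as the set of current representatives, updated whenever \texttt{DELETE-EDGE} reports new SCCs.

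The crux is maintaining the in-degree map $\mathcal{D}$. I would argue as follows. An edge $(v,u)$ contributes to $\mathcal{D}[\mathcal{SCC}(u)]$ exactly when $v \notin \mathcal{SCC}(u)$, i.e.\ when $\mathcal{SCC}(v) \neq \mathcal{SCC}(u)$. When we call \texttt{DELETE-EDGE}$(v,u)$: if the SCC does not split, then if $(v,u)$ was an inter-SCC edge we decrement $\mathcal{D}[\mathcal{SCC}(u)]$ by one, and otherwise $\mathcal{D}$ is unchanged (an intra-SCC edge being deleted without a split contributes $0$ before and $0$ after). If the SCC $U$ containing $u$ splits into pieces $U_1,\dots,U_t$, then the data structure of Theorem~\ref{thm:SCC} hands us the new components; for each new SCC $U_i$ we must compute its in-degree from scratch. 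The key efficiency point is that when an SCC of size $k$ splits, we can recompute all the new in-degrees by scanning the edges incident to $U$, and standard decremental-SCC analysis (the same that underlies the $\Oh(|E|\log^4|V|)$ bound) shows the total size of all such scans over the whole execution is $\tO(|E|)$ — indeed this is essentially the cost the algorithm of Bernstein et al.\ already pays when it rebuilds component structure. More carefully, I would charge the rescan of the edges touching a newly created SCC $U_i$ to the vertices of $U_i$, using that each vertex lands in a freshly created SCC at most $\Oh(\log|V|)$ times (since SCCs only shrink and we may assume the charging scheme splits off small pieces), and each such event costs $\Oh(\deg(\cdot)\log|V|)$, for a total of $\Oh(|E|\log^2|V|)$, which is absorbed by the $\Oh(|E|\log^4|V|)$ bound.

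The remaining operations are immediate: \texttt{SAME-SCC} is inherited verbatim from Theorem~\ref{thm:SCC}, and \texttt{DELETE-EDGE} now additionally returns the list of new SCCs, which the underlying structure already produces (or can be made to produce by comparing $\mathcal{SCC}$ before and after). The main obstacle I anticipate is the second step — cleanly bounding the cost of recomputing in-degrees upon splits without re-deriving the internals of the Bernstein et al.\ structure. The honest way to handle this is to observe that the decremental SCC algorithm is already component-aware: each time it reports a split it has essentially just examined the relevant part of the graph, so augmenting each reported new component with a single pass to tally incoming edges multiplies the per-split cost by only a constant (or a $\log$) factor, and hence the total stays $\Oh(|E|\log^4|V|)$. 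I would state this as the one place where we rely on structural properties of Theorem~\ref{thm:SCC} beyond its black-box interface, or, alternatively, note that even a fully black-box implementation that rescans incident edges on every split, charged via the ``smaller-half'' argument above, gives the same asymptotic bound. Either way the claimed total expected running time and $\Oh(1)$ query time follow, and correctness is immediate from the invariant that $\mathcal{D}$, $\mathcal{S}$, $\mathcal{SCC}$ are updated consistently after every deletion.
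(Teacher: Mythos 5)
There is a genuine gap. Theorem~\ref{thm:SCC} as stated is a pure black box: it answers \texttt{SAME-SCC} with a boolean and nothing more. Your argument repeatedly leans on capabilities it does not provide --- ``the one answering \texttt{SAME-SCC} queries'' as a canonical representative, the structure being ``component-aware'', or the claim that it ``already produces'' the list of new SCCs (``or can be made to produce by comparing $\mathcal{SCC}$ before and after'', which is circular: identifying which vertices changed component is exactly the task at hand). The entire technical content of the corollary is the mechanism for detecting a split and \emph{enumerating the vertices of the new SCCs} while spending time proportional only to the smaller pieces, and your proposal never supplies it. The paper does: after deleting $(u,v)$ inside an SCC $U$, it runs \emph{interleaved} BFSes (on the undirected neighbor lists, filtered by \texttt{SAME-SCC} queries) from $u$ and $v$ --- and, via a worklist $L$ of boundary vertices, from further seeds, since one deletion can create many new SCCs --- stopping both searches as soon as the one in the component of smaller total degree finishes, and leaving the largest new piece untouched with the old ID. A correctness argument (every vertex of $U$ outside the largest piece is weakly reachable from $u$ or $v$ within the removed part) is also needed and is absent from your write-up.

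Your accounting is likewise shaky at the one place it matters. Recomputing in-degrees ``by scanning the edges incident to $U$'' (the old SCC) on every split is not $\tO(|E|)$ in total: an SCC can shed one vertex per deletion, giving $\Omega(|V|)$ splits each costing up to $\Omega(|E|)$. The statement ``each vertex lands in a freshly created SCC at most $\Oh(\log|V|)$ times'' is also false unless ``freshly created'' explicitly excludes the largest new piece --- which is precisely the convention the enumeration procedure must enforce, not something one ``may assume''. The correct charge is the paper's: only the non-largest new SCCs are enumerated and have their incident edges scanned; their total degree is at most half that of the old SCC, so each edge is scanned $\Oh(\log|V|)$ times, giving $\Oh(|E|\log|V|)$ for the searches (plus an $\Oh(\log|V|)$ factor for dictionary updates), absorbed by the $\Oh(|E|\log^4|V|)$ bound of Theorem~\ref{thm:SCC}. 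Your fallback ``smaller-half'' remark points in this direction, but without the interleaved, early-terminating search restricted by \texttt{SAME-SCC} you have no way to realize that charging scheme using only the stated interface, so the proof as written does not go through.
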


\begin{proof}
The high-level strategy of the algorithm is to first use the data structure from~\cite{bernstein2019decremental}, and then use the ``removing small from large'' strategy used in, e.g.,~\cite{shiloach1981line,thorup1999decremental,bernstein2019decremental} to explicitly maintain all the SCCs.

Initially, it is easy to set up the set $\mathcal{S}$ and the maps $\mathcal{D}$ and $\mathcal{SCC}$.
We also initialize the data structure of Bernstein et al.~\cite{bernstein2019decremental}.
For each vertex $v$, we create a list of vertices $\mathcal{N}(v)$ that contain all the neighbors of $v$ considering edges in both directions.

For each \texttt{DELETE-EDGE}$(u, v)$ operation, if $u$ and $v$ are not in the same SCC, it suffices to update the map $\mathcal{D}$.
If $u$ and $v$ are in the same SCC both before and after deleting the edge, we do not need to update $\mathcal{S}, \mathcal{D}$ or $\mathcal{SCC}$.

The trickiest case is when $u$ and $v$ are in the same SCC $U$ before deleting $(u, v)$, but not in the same SCC after deleting $(u, v)$.
In this case, we use the ``removing small from large'' method to find all new SCCs.
We will explicitly enumerate all the vertices in every new SCC except one SCC with the largest total degree of the vertices in it.

We first delete edge $(u, v)$ in the data structure of Bernstein et al.~\cite{bernstein2019decremental}.
Then we create a list of vertices $L$, which initially only contains $u$ and $v$.
Each time, we repeatedly take a vertex out of $L$ until we find one vertex $x$ whose new SCC has not been found.
Then we repeatedly take another vertex out of $L$ until we find one vertex $y$ which is not in the same new SCC as $x$ and whose new SCC has not been found.
If we could not find such a vertex $y$ before $L$ becomes empty, then we end the process.
We interleave two breadth-first-searches (BFSes) from vertex $x$ and vertex $y$ by using neighbors stored in $\mathcal{N}$, but exploring only those vertices which are in the same new SCC as $x$ or $y$ respectively by calling \texttt{SAME-SCC}.
Note that these BFSes use all edges in the original graph and ignore the edge directions.
As soon as one of the BFSes finishes (the SCC with the smaller total degree will finish sooner), we stop the other BFS as well.
Without loss of generality, we assume the BFS from $x$ finishes sooner.
In this case we have a list of all vertices in the same SCC with $x$.
For every vertex $w$ in this list, we enumerate all vertices $z$ in $\mathcal{N}(w)$, and add $z$ to $L$ if $z$ belonged to the SCC $U$ (we could call the map $\mathcal{SCC}$ for checking this, since it is not updated after $(u, v)$ gets deleted).
Finally we put $y$ back to $L$ and repeat the above process.

We show that the process above will find all new SCCs except the one SCC $U'$ with the largest total degree.
To show this, it suffices to show that all vertices in the induced subgraph $G[U \setminus U']$ are either weakly connected to $v$ or weakly connected to $u$ via a path inside $G[U \setminus U']$.
Let $z$ be any vertex in $U \setminus U'$.
Since before we delete $(u, v)$, $U$ is an SCC, so there is a simple path $p_1$ inside $U$ from $v$ to $z$ that only uses edges still in the graph including $(u, v)$.
However, since $p_1$ starts from $v$, and it is simple, it will not use $(u, v)$, so $p_1$ still exists even after deleting $(u, v)$.
Similarly, there is a path from $z$ to $u$ that still exists after deleting $(u, v)$.
If one of $p_1$ or $p_2$ does not enter $U'$, then we are done.
Otherwise, both $p_1$ and $p_2$ touch $U'$, so $z$ can both reach and be reached from $U'$.
Hence, $z$ also belongs to the SCC $U'$, a contradiction.
Therefore, our algorithm will find all but one new SCCs.

Since we can list all vertices in all but one new SCCs, it is then easy to update $\mathcal{S}, \mathcal{D}$ and $\mathcal{SCC}$.
However, we should note that the SCC $U'$ with the largest total degrees will have the same ID as the old SCC $U$, since we cannot afford to update the map $\mathcal{SCC}$ for every vertex in $U'$.
It is also easy to return a list of new SCCs being created.

Now we analyze the running time of the BFSes, which is quite standard.
Each time we find an SCC with total degree $D$, we will pay $\Oh(D)$ in the BFS and $\Oh(D \log |V|)$ for updating $\mathcal{S}, \mathcal{D}$ and $\mathcal{SCC}$.
We also know that we are taking it from an old SCC of total degree at least $2D$.
Therefore, each edge that contributes to $D$ can be taken out at most $\Oh(\log |V|)$ times, since each time the SCC that contains one endpoint of the edge must halve in total degree.
Therefore, the total cost of BFSes is $\Oh(|E| \log |V|)$, and there is another $\Oh(\log |V|)$ factor for updating the necessary data structures.
Thus, it is clear that the running time is dominated by the $\Oh(|E| \log^4 |V|)$ factor from~\cite{bernstein2019decremental}.
\end{proof}

Using Corollary~\ref{cor:dec_scc}, the algorithm for finding all winners is described in Algorithm~\ref{algo:all_winner}.

\begin{algorithm}
  \caption{\schulzeaw($G=(V,E)$)}\label{algo:all_winner}
  \Indentp{-1em}
    \KwData{$G=(V,E)$}
    \KwResult{All winners in graph $G$.}
  \Indentp{1em}
  Let $W$ be a sorted list of all distinct weights of $E$\;
  Let $\mathcal{C}$ be a set containing a single SCC, the whole graph\;
  \For{$w \in W$\label{alg2:3}}
  {
    Let $L$ be an empty list\;
    \For{every edge $(u, v)$ of weight $w$}
    {
        $\textit{scc} \gets \mathcal{SCC}(u)$\;
        $\textit{flag} \gets \texttt{SAME-SCC}(u, v)$\;
        $l \gets \texttt{DELETE-EDGE}(u, v)$\;
        \If{\texttt{SAME-SCC}$(u, v) = $ False \textbf{ and } \textit{flag} \textbf{ and } $\textit{scc} \in \mathcal{C}$}{
            Remove $scc$ from $\mathcal{C}$\;
            Add every SCC in $l$ to $\mathcal{C}$\;
            Add every SCC in $l$ to $L$\;
        }
    }
    \For{$\textit{scc} \in L$}
    {
        \If{$\mathcal{D}(\textit{scc}) > 0$}
        {
            Remove \textit{scc} from $\mathcal{C}$\;
        }
    }
  }
  \KwRet{$\{v \in V: \mathcal{SCC}(v) \in \mathcal{C}\}$}\;\label{alg2:21}
\end{algorithm}

We also describe the algorithm in text for more intuition.
The algorithm maintains a set $\mathcal{C}$ that contains all candidate SCCs that could contain winners.
In increasing order of weight $w$, the algorithm deletes all edges of weight $w$ in a batch.
If a candidate SCC splits into multiple smaller SCCs, we remove this candidate SCC from $\mathcal{C}$ and add those small SCCs whose in-degrees are $0$s back to $\mathcal{C}$.
Eventually, all SCCs contain single vertices, and we return the single vertices in those candidate SCCs.
Now we prove the correctness of the algorithm via the following lemma.

\begin{lemma}
\label{lem:all_winner_induction}
Before and after each iteration in the ``for'' loop in Line~\ref{alg2:3} of Algorithm~\ref{algo:all_winner}, we have
$$
\mathcal{W}(G) = \bigcup_{\textit{scc} \in \mathcal{C}} \mathcal{W}(G[\textit{scc}]),
$$
where $G$ denotes the original graph with no edge removed.
\end{lemma}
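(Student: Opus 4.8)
The plan is to prove the invariant by induction on the iterations of the outer ``for'' loop in Line~3. The base case is immediate: before any iteration $\mathcal{C}$ contains only the whole graph, which is a single SCC since $G$ is complete, so $\bigcup_{\mathit{scc}\in\mathcal{C}}\mathcal{W}(G[\mathit{scc}]) = \mathcal{W}(G)$. For the inductive step, I would fix a weight $w\in W$ and assume the invariant holds at the start of the iteration for $w$; I must show it still holds at the end. The key structural fact I would first isolate is an analogue of the reasoning in Lemma~\ref{lem:one_winner_induction}, stated for \emph{all} winners rather than a single one: if $U$ is an SCC appearing in $\mathcal{C}$ at some point (so $U$ consists only of edges of weight $\ge$ the current threshold), and deleting a batch of weight-$w$ edges splits $U$ into new SCCs $U_1,\dots,U_t$, then $\mathcal{W}(G[U])$ is exactly the union of $\mathcal{W}(G[U_j])$ over those $U_j$ whose in-degree (within $U$) is zero.

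To prove that structural claim, I would argue in two directions. First, any winner $y$ of $G[U]$ must lie in an in-degree-zero piece: if $U_j$ has positive in-degree, pick an edge $(v,u)$ with $v\notin U_j$, $u\in U_j$, $v\in U$; then for $y\in U_j$ the vertex $v$ cannot be reached from $y$ using surviving edges (else $v\in U_j$), so $B_{G[U]}(v,y)\le w$ while $B_{G[U]}(y,v)\ge$ the bottleneck inside the strongly connected $U_j$, which exceeds $w$ once we are strictly past weight $w$ — wait, more carefully: here I would mirror the $z\in U\setminus U'$ case of Lemma~\ref{lem:one_winner_induction}, showing $B_{G[U]}(v,y)\le w < B_{G[U]}(y,v)$, contradicting that $y$ is a winner of $G[U]$, hence $y$'s piece has in-degree zero. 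Conversely, for $U_j$ with in-degree zero, I would show $\mathcal{W}(G[U_j]) = \mathcal{W}(G[U]) \cap U_j$ by the argument from the $z\in U'$ case of Lemma~\ref{lem:one_winner_induction}: paths inside $G[U]$ that stay inside $U_j$ realize the same bottlenecks as in $G[U_j]$, and any path leaving $U_j$ must use a deleted (weight $\le w$) edge, hence has bottleneck $\le w$, which is dominated by the bottleneck achievable within the strongly connected $U_j$.

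With that structural claim in hand, the inductive step reduces to bookkeeping about how Algorithm~\ref{algo:all_winner} updates $\mathcal{C}$. During the inner loop over weight-$w$ edges, whenever a current candidate SCC $\mathit{scc}\in\mathcal{C}$ splits, the algorithm removes $\mathit{scc}$ and inserts all resulting pieces (both into $\mathcal{C}$ and into $L$); after the batch is processed it removes from $\mathcal{C}$ exactly those pieces in $L$ with positive in-degree $\mathcal{D}(\mathit{scc})>0$. I would note that because all weight-$w$ edges are deleted within this single iteration, the in-degree values $\mathcal{D}$ are read only after \emph{all} such deletions, so $\mathcal{D}(\mathit{scc})$ correctly reflects in-edges of weight $>w$ (all lighter ones are gone). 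Applying the structural claim to each SCC that was split, the set of pieces retained in $\mathcal{C}$ is precisely the in-degree-zero pieces, and for those $\mathcal{W}(G[\text{old scc}]) = \bigcup \mathcal{W}(G[\text{retained piece}])$; SCCs in $\mathcal{C}$ that did not split are untouched and contribute unchanged. Summing over all candidate SCCs and invoking the induction hypothesis gives $\mathcal{W}(G) = \bigcup_{\mathit{scc}\in\mathcal{C}}\mathcal{W}(G[\mathit{scc}])$ at the end of the iteration.

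The main obstacle I expect is handling the batch deletion carefully together with the ``largest piece keeps the old ID'' convention from Corollary~\ref{cor:dec_scc}: I must make sure the structural claim is applied to the right ``parent'' SCC at the right moment, that a piece which itself splits again later in the same or a later batch is accounted for by iterating the claim, and that the check ``$\mathit{scc}\in\mathcal{C}$'' in Line~9 correctly excludes SCCs that were already discarded (because they had positive in-degree from a heavier edge) so that we never re-add descendants of a non-candidate SCC. A secondary subtlety is confirming that when $\mathit{scc}$ does not split but loses some in-edges, no update is needed — which holds because losing in-edges cannot turn a non-winner-containing SCC into a winner-containing one once it is already out of $\mathcal{C}$, and an SCC in $\mathcal{C}$ by the invariant already has in-degree zero among surviving heavier edges. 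Making these case distinctions airtight, rather than any single hard inequality, is where the real work lies.
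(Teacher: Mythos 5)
Your overall strategy --- induction over the weight batches plus a per-SCC structural claim that $\mathcal{W}(G[U])=\bigcup_{j}\mathcal{W}(G[U_j])$ where the union runs over the in-degree-zero pieces --- is exactly the route the paper takes, and your bookkeeping discussion of how $\mathcal{C}$ is updated per batch is fine. However, both halves of the structural claim have problems as written. In the forward direction, the inequality you finally assert, $B_{G[U]}(v,y)\le w< B_{G[U]}(y,v)$, does \emph{not} contradict $y$ being a winner of $G[U]$ (a winner is allowed, indeed required, to satisfy $B(y,v)\ge B(v,y)$); what you need is the reverse: the surviving in-edge $(v,u)$ has weight $>w$ and $U_j$ is strongly connected by surviving edges, so $v$ reaches $y$ with bottleneck $>w$, giving $B_{G[U]}(v,y)>w$, while $y$ cannot reach $v$ on surviving edges (else $v\in U_j$), giving $B_{G[U]}(y,v)\le w$; hence $B_{G[U]}(y,v)<B_{G[U]}(v,y)$, contradicting winnerhood. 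Your intermediate justification also attributes the ``bottleneck inside $U_j$'' to $B_{G[U]}(y,v)$ even though $v\notin U_j$, so this is not merely a notational swap but a step that must be redone.

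In the converse direction you claim $\mathcal{W}(G[U_j])=\mathcal{W}(G[U])\cap U_j$ for an in-degree-zero piece, but the argument you sketch (bottlenecks between two vertices of $U_j$ agree in $G[U_j]$ and in $G[U]$) only yields $\mathcal{W}(G[U])\cap U_j\subseteq\mathcal{W}(G[U_j])$. The inclusion the lemma actually needs, $\mathcal{W}(G[U_j])\subseteq\mathcal{W}(G[U])$, additionally requires comparing a winner $x$ of $G[U_j]$ against every $y\in U\setminus U_j$, and this is exactly where the in-degree-zero hypothesis does its work in the paper's proof: since no surviving edge enters $U_j$, every $y$-to-$x$ path in $G[U]$ uses an edge of weight $\le w$, so $B_{G[U]}(y,x)\le w$, while $U$ was strongly connected using edges of weight $\ge w$, so $B_{G[U]}(x,y)\ge w$. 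Without this outside-$U_j$ comparison the structural claim, and hence the induction, does not go through; adding it (it mirrors the paper's case $y\notin U_i'$, not the within-piece case you cite) completes your proof.
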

\begin{proof}
Before running any iteration of the ``for'' loop, $\mathcal{C}$ only contains the whole graph, so the equality is clearly true.

Now we prove the equality by induction.
Suppose that the equality is true before we run the ``for'' loop for value $w$.
For each vertex set $U \in \mathcal{C}$ that was an SCC before deleting all edges of weight $w$, we will split it into multiple SCCs after deleting those edges, and put those $U_1', \ldots, U_t'$ whose in-degrees are $0$s back to $\mathcal{C}$.
Thus, it suffices to show that
$$
\mathcal{W}(G[U]) = \bigcup_{i=1}^t \mathcal{W}(G[U_i']).
$$

Let $x \in U$ be an arbitrary winner of $G[U]$.
First, suppose the in-degree of $\mathcal{SCC}(x)$ after deleting the weight $w$ edges are nonzero.
In this case, there exists another vertex $y \in U \setminus \mathcal{SCC}(x)$ that can reach $x$.
Therefore, $B_{G[U]}(y, x) > w$.
However, $x$ cannot reach $y$ since otherwise $y$ is in the same SCC as $x$, so $B_{G[U]}(x, y) \le w$, and thus $x$ cannot be a winner of $G[U]$, a contradiction.
Therefore, we can assume $x \in U_i'$ for some $1 \le i \le t$.
For any $y \in U_i'$, consider the widest path between $x$ and $y$.
Since $x$ and $y$ belong to the same SCC after removing weight $w$ edges, $B_{G[U]}(x, y) > w$ and $B_{G[U]}(y, x) > w$.
Also, if any path leaves $U_i'$ and comes back, the bottleneck of that path is upper bounded by $w$, so an optimal path will not leave $U_i'$.
Therefore, $B_{G[U_i']}(x, y) = B_{G[U]}(x, y)$ and $B_{G[U_i']}(y, x) = B_{G[U]}(y, x)$.
Since $x$ is a winner of $G[U]$, $B_{G[U]}(x, y) \ge B_{G[U]}(y, x)$.
Therefore, $B_{G[U_i']}(x, y) \ge B_{G[U_i']}(y, x)$.
We conclude that $x$ is a winner of $G[U_i']$.
Hence, $\mathcal{W}(G[U]) \subseteq \bigcup_{i=1}^t \mathcal{W}(G[U_i'])$.

Now we show the other direction.
For any $1 \le i \le t$, let $x$ be an arbitrary winner of $G[U_i']$.
For any $y \in U$, we want to show that $B_{G[U]}(x, y) \ge B_{G[U]}(y, x)$.
First, if $y \not \in U_i'$, then there is no path from $y$ to $x$ using only edges of weight greater than $w$, because the in-degree of $U_i'$ is $0$.
Therefore, $B_{G[U]}(y, x) \le w$.
On the other hand, since $x, y \in U$ and $U$ is strongly connected using edges of weight up to $w$, we have $B_{G[U]}(x, y) \ge w$.
Thus, $B_{G[U]}(x, y) \ge B_{G[U]}(y, x)$.
Secondly, if $y \in U_i'$, then the widest paths from $x$ to $y$ and from $y$ to $x$ are completely inside $U_i'$ by a previous argument, so $B_{G[U_i']}(x, y) = B_{G[U]}(x, y)$ and $B_{G[U_i']}(y, x) = B_{G[U]}(y, x)$.
Since $x$ is a winner of $G[U_i']$, $B_{G[U_i']}(x, y) \ge B_{G[U_i']}(y, x)$.
Therefore, $B_{G[U]}(x, y) \ge B_{G[U]}(y, x)$.
We conclude that $x$ is a winner of $G[U]$ and hence $\mathcal{W}(G[U]) \supseteq \bigcup_{i=1}^t \mathcal{W}(G[U_i'])$.

In conclusion, we showed that $\mathcal{W}(G[U]) = \bigcup_{i=1}^t \mathcal{W}(G[U_i'])$, which is sufficient for the induction to complete.
\end{proof}

\begin{theorem}\label{thm:schulze-allwinner-n2log4-n}
 Given a weighted majority graph on $m$ candidates, \schulzeaw can be solved in expected $\Oh(m^2 \log^4(m))$ time.
\end{theorem}
\begin{proof}
The algorithm is shown in Algorithm~\ref{algo:all_winner}.
By Lemma~\ref{lem:all_winner_induction}, at Line~\ref{alg2:21}, we have $\mathcal{W}(G) = \bigcup_{\textit{scc} \in \mathcal{C}} \mathcal{W}(G[\textit{scc}])$.
Furthermore, since all edges are deleted in the graph and thus all SCCs contain single vertices at Line~\ref{alg2:21}, $\mathcal{W}(G[\textit{scc}]) = V(\textit{scc})$.
Therefore, $\mathcal{W}(G) = \{v \in V: \mathcal{SCC}(v) \in \mathcal{C}\}$ and thus the returned result is correct.

Algorithm~\ref{algo:all_winner} clearly runs in expected $\Oh(m^2 \log^4(m))$ time.
\end{proof}

\section{Lower Bounds}

\begin{theorem}
\label{thm:lb_graph}
If there exists a $T(n)$ time algorithm for computing all the edge weights of the weighted majority graph when there are $n$ voters and $2n$ candidates, then there is an $\Oh(T(r) + r^2 \log r)$ time algorithm for computing the Dominance Product of two $r\times r$ matrices.
\end{theorem}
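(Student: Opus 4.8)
The plan is to reduce Dominance Product to the weighted-majority-graph construction problem, mirroring the forward direction of Proposition~\ref{prop:computing_weights} but now going the other way. Recall from that proposition that if we can encode each voter's preference as a weak order whose rank function is $f_{a,\cdot}$, then the entry $M(u,v)$ of the majority matrix counts exactly the voters $a$ with $f_{a,u} < f_{a,v}$. So given two $r \times r$ matrices $A$ and $B$ whose Dominance Product we wish to compute, I would introduce $r$ voters, one per ``$k$-coordinate'', and $2r$ candidates: a set $\{x_1,\dots,x_r\}$ of ``row'' candidates and a set $\{y_1,\dots,y_r\}$ of ``column'' candidates. The $k$-th voter should rank $x_i$ according to $A[i,k]$ and $y_j$ according to $B[k,j]$, interleaved on a common scale, so that in the majority graph $M(x_i, y_j)$ counts the number of $k$ with $A[i,k] < B[k,j]$, i.e.\ (up to the usual $\pm\frac12$ trick to turn $\le$ into $<$ on integer data) exactly the Dominance Product entry $C[i,j]$.

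The one subtlety, and the step I expect to be the main obstacle, is realizing the desired interleaved rankings by \emph{weak orders} of a single consistent set of candidates. For a fixed voter $k$, the values $\{A[i,k]\}_i \cup \{B[k,j]\}_j$ give a total preorder on the $2r$ candidates, which is a perfectly legal weak-order ballot: the $k$-th voter just lists the $x_i$'s and $y_j$'s sorted by their respective matrix entries (ties allowed, which is exactly what a weak order permits). After the $\frac12$-shift of the $B$-values — or equivalently, breaking ties in favor of the $y$'s being ranked lower — we get $x_i \succ_k y_j$ iff $A[i,k] \le B[k,j]$. Summing over voters, $M(x_i,y_j) = |\{k : A[i,k] \le B[k,j]\}| = C[i,j]$, so reading off the $x$-to-$y$ block of $M$ (recoverable from the weights $w(x_i,y_j) = M(x_i,y_j)-M(y_j,x_i)$ together with $n$) yields the Dominance Product. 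We only need the $r \times r$ block of comparisons between an $x$ and a $y$; the $x$-$x$, $y$-$y$, and $y$-$x$ blocks are irrelevant.

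For the running time: building the $2r$-candidate, $r$-voter profile from $A$ and $B$ takes $\Oh(r^2 \log r)$ time (sorting each voter's $2r$ values), then one call to the assumed $T(\cdot)$-time algorithm on $n = r$ voters and $2n = 2r$ candidates produces all edge weights, and finally recovering $M$ and hence $C$ from the weights and $n$ costs another $\Oh(r^2)$. Total: $\Oh(T(r) + r^2\log r)$, as claimed. If one wants to match the literal statement ``$n$ voters and $2n$ candidates'' exactly, note $n = r$ and the candidate count is $2r = 2n$, so no padding is needed; if the target reduction instead needed $m = \Theta(n)$ with a particular constant, one pads with dummy voters and candidates, which changes nothing asymptotically. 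The only thing to double-check carefully is that the $\pm\frac12$ encoding correctly converts the weak comparison $A[i,k]\le B[k,j]$ — and not the strict one — into the strict ballot comparison $x_i \succ_k y_j$, since the Dominance Product is defined with $\le$; this is precisely the same bookkeeping already done in the proof of Proposition~\ref{prop:computing_weights}, just applied in reverse.
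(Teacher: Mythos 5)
Your reduction is correct and is essentially the paper's own proof: $r$ voters and $2r$ candidates, voter $k$ ranking the row candidates by $A[\cdot,k]$ and the column candidates by $B[k,\cdot]$ on a common scale with ties between the two blocks broken in favor of the row candidates (the paper does this by globally renaming entries to make them distinct, sorting $A$-entries before equal $B$-entries), so that $M(x_i,y_j)=C[i,j]$ and $C[i,j]=(w(x_i,y_j)+r)/2$, all within $\Oh(T(r)+r^2\log r)$ time. No substantive differences.
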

\begin{proof}

Given two $r \times r$ matrices $A$ and $B$, we will compute their Dominance Product $C$, where $C_{i,j} = \left| k \in [r]: A_{i, k} \le B_{k, j}\right|$, using the assumed $T(n)$ time algorithm for computing the weighted majority graph.

We first pre-process the two matrices so that all entries are distinct.
We could achieve this by first putting all the entries to a list, and then sorting the list.
If there is a tie between several elements, we always sort an element corresponding to an entry of $A$ earlier than an element corresponding to an entry of $B$.
Then we can replace all entries with their position in the sorted list.
The Dominance Product of $A$ and $B$ clearly does not change.
This pre-processing only takes $\Oh(r^2 \log r)$ time.

In our construction, there are $m=2r$ candidates labeled as $u_1, \ldots, u_r$ and $v_1, \ldots, v_r$.
We will create one voter for each $k \in [r]$, so there are a total of $n = r$ voters.
The $k$-th voter associates a number $A_{i, k}$ with candidate $u_i$ for every $i \in [r]$ and a number $B_{k, j}$ with candidate $v_j$ for every $j \in [r]$.
Then the $k$-th voter prefers a candidate $x$ over a candidate $y$ if and only if the associated number of candidate $x$ is smaller than that of candidate $y$.
Since all entries of these two matrices are distinct, the preference order of each voter is linear (i.e., for every voter $i$ and any distinct candidates $x$ and $y$ we have either $x \succ_i y$ or $x \prec_i y$).

We show that the value $M(u_i, v_j)$ corresponding to preference profile above equals $C_{i, j}$.
In fact, if the $k$-th voter prefers $u_i$ to $v_j$, then its associated number with $u_i$ is smaller than its associated number with $v_j$, i.e., $A_{i, k} < B_{k, j}$.
Therefore, $M(u_i, v_j)$ equals the number of $k$ such that $A_{i, k} < B_{k, j}$.
Since all entries of the two matrices are distinct, $A_{i, k} < B_{k, j}$ if and only if $A_{i, k} \le B_{k, j}$.
Thus, $M(u_i, v_j) = C_{i, j}$.

In the weighted majority graph built on linear orders only, the edge weight $w(u_i, v_j)$ equals to $M(u_i, v_j) - M(v_j, u_i) = 2M(u_i, v_j) - r = 2C_{i,j} - r$, so we could compute $C_{i,j}$ from $w(u_i, v_j)$ via $C_{i,j} = (w(u_i, v_j) + r) / 2$.
Therefore, we can call the $T(n) = T(r)$ time algorithm for computing the weighted majority graph and get the Dominance Product $C$ from the edge weights of the graph easily.
\end{proof}

Next, we will show the conditional hardness for \schulzewd and \schulzew by reducing from the Dominating Pairs problem.
Note that in the Dominating Pairs problem, we essentially want to test if the Dominance Product of two $r \times r$ matrices contains an entry of value $r$.

\begin{theorem}
\label{thm:lb_testing_winner}
Suppose that there is an $\Oh(T(n))$ time algorithm that, given $n$ voters with preferences over $\Theta(n)$ candidates, can solve \schulzewd (or \schulzew).
Then there is an $\Oh(T(r)+r^2\log r)$ time algorithm for the Dominating Pairs problem for two $r\times r$ matrices.
\end{theorem}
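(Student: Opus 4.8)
The plan is to mimic the reduction in Theorem~\ref{thm:lb_graph}, but tailor the construction so that a designated candidate is a Schulze possible winner if and only if the Dominating Pairs instance is a ``yes'' instance. Recall that Dominating Pairs asks whether there exist $i,j\in[r]$ with $A_{i,k}\le B_{k,j}$ for all $k\in[r]$; equivalently, whether the Dominance Product of $A$ and $B$ has an entry equal to $r$. First I would pre-process $A$ and $B$ in $\Oh(r^2\log r)$ time exactly as before so that all $2r^2$ entries are distinct, breaking ties in favor of $A$-entries, so that $A_{i,k}<B_{k,j}$ iff $A_{i,k}\le B_{k,j}$; this does not change the Dominance Product and hence not the answer to Dominating Pairs.

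Next I would build a preference profile over $\Theta(r)$ candidates. Following the core gadget of Theorem~\ref{thm:lb_graph}, I introduce candidates $u_1,\ldots,u_r$ (the ``rows'' of $A$) and $v_1,\ldots,v_r$ (the ``columns'' of $B$), together with $r$ voters, where voter $k$ ranks candidates by the associated numbers $A_{i,k}$ for $u_i$ and $B_{k,j}$ for $v_j$ (each voter's order is linear since all entries are distinct). As shown there, $M(u_i,v_j)$ equals the number of $k$ with $A_{i,k}<B_{k,j}$, so $w(u_i,v_j)=2M(u_i,v_j)-r$, and $w(u_i,v_j)=r$ (the maximum possible) precisely when $(i,j)$ is a dominating pair. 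The subtlety is that \schulzewd{} is about widest \emph{paths}, not single edges, and about one \emph{fixed} candidate, so the raw gadget above is not enough: a candidate $v_j$ could be a Schulze winner via indirect comparisons even without any direct domination. I would therefore add a small number of auxiliary candidates and possibly auxiliary voters to (a) make all the $u_i$'s and $v_j$'s ``weak'' against a fresh special candidate $z$ unless a dominating pair exists, and (b) ensure that $z$'s indirect-comparison values $B(z,\cdot)$ and $B(\cdot,z)$ are controlled: we want $z$ to fail to be a possible winner exactly when \emph{no} pair $(i,j)$ dominates, and to succeed (i.e., $B(z,w)\ge B(w,z)$ for all candidates $w$) when some pair does. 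A natural way: route a high-weight ``backbone'' cycle through the auxiliary candidates so that the only way for $z$ to achieve a large enough bottleneck to every $v_j$ is to pass through an edge $u_i\to v_j$ of weight $r$, which exists iff $(i,j)$ is dominating; conversely if no such edge exists, some $v_j$ beats $z$ indirectly with strictly larger bottleneck. Getting these weights to line up — in particular making sure the $\Theta(1)$ extra voters produce exactly the needed edge strengths modulo the parity constraint $w=2M-(\text{\#voters})$, and that widest paths cannot ``cheat'' around the gadget — is the main obstacle, and is where I expect the bulk of the work (a careful case analysis of widest paths into and out of $z$).

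With such a gadget in place, the reduction is: construct the profile (size $\Theta(r)$ candidates, $\Theta(r)$ voters, buildable in $\Oh(r^2\log r)$ time since each voter's ranking is a sort of $\Theta(r)$ numbers, plus $\Oh(r^2\log r)$ for the pre-processing), invoke the assumed $\Oh(T(n))$-time \schulzewd{} algorithm on candidate $z$ with $n=\Theta(r)$, and report ``yes'' for Dominating Pairs iff $z$ is a possible Schulze winner. The total running time is $\Oh(T(r)+r^2\log r)$, as claimed. I would close by verifying the two directions of correctness against the gadget's widest-path analysis: if $(i,j)$ dominates then exhibit, for every candidate $w$, a $z$-to-$w$ path of bottleneck at least that of every $w$-to-$z$ path; and if no pair dominates, exhibit one candidate $w$ (some $v_j$) with $B(w,z)>B(z,w)$.
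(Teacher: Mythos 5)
There is a genuine gap: you have correctly identified the paper's overall strategy (reuse the Dominance Product gadget of Theorem~\ref{thm:lb_graph}, add a distinguished candidate plus auxiliary candidates/voters, and reduce Dominating Pairs to a single \schulzewd{} query), but the entire substance of the proof is exactly the part you defer as ``the main obstacle'' and ``the bulk of the work.'' You never specify the gadget: which auxiliary candidates are added, which blocks of voters realize which edge weights, and why the widest-path analysis closes. The paper's proof is almost entirely this construction: it adds \emph{two} special candidates $W$ and $W'$ and $10r-2$ voters, where beyond the $2r$ ``matrix'' voters it uses paired/reversed linear orders (McGarvey-style blocks) to force specific strengths such as $w(W,W')=2r$, $w(W,u_i)=2r-2$, $w(W',v_j)=2r$, $w(v_j,W)=2r$, while leaving $w(u_i,v_j)=4C_{i,j}-2r$ intact; it also pads $A$ and $B$ so that every $C_{i,j}$ is strictly positive, which is what caps every edge entering $\{u_1,\dots,u_r\}$ at weight $2r-2$ and thereby blocks any widest path from ``cheating'' around the gadget. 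The correctness claim (a dominating pair exists if and only if $W$ is \emph{not} a possible winner) then needs a two-direction case analysis over paths into and out of $W$. None of this is present or even sketched concretely in your proposal, so correctness cannot be assessed; a reader cannot verify that your single candidate $z$ with a ``high-weight backbone cycle'' can simultaneously satisfy the parity constraints, keep the matrix-encoding weights undisturbed, and rule out indirect bottleneck paths that bypass the dominating-pair edge.

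A secondary remark: you aim for the equivalence ``$z$ is a possible winner iff a dominating pair exists,'' which is the reverse of the paper's ``$W$ is not a possible winner iff a dominating pair exists.'' Either direction suffices for the reduction, but the paper's orientation is what makes the gadget easy: a single dominating-pair edge $u_i\to v_j$ of maximal weight, chained with the forced edge $v_j\to W$, gives $u_i$ a bottleneck of $2r$ against $W$, while $W$'s best bottleneck toward any $u_i$ is capped at $2r-2$; making a designated candidate \emph{win} only in the presence of a dominating pair is harder to arrange, since possible winnership requires controlling $B(z,w)$ versus $B(w,z)$ for \emph{all} $w$ simultaneously. As written, your proposal is an outline of the right reduction template, but the key construction and its verification --- which constitute the theorem's proof --- are missing.
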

\begin{proof}

Given two $r \times r$ matrices $A$ and $B$, we will create a preference profile on $m=2r+2$ candidates $u_1, \ldots, u_r, v_1, \ldots, v_r, W, W'$ and $n=10r-2$ voters.
For technical reasons, we assume all entries in the Dominance Product $C$ between $A$ and $B$ are positive.
This can easily be addressed by padding an additional column with $0$s to matrix $A$ and a corresponding row with $1$s to matrix $B$.
We can perform additional padding to make $A$ and $B$ square again:
add a row to $A$ that contains entries strictly greater than any entry in $B$ (except the last one entry in this row which equals to $0$);
add a column to $B$ that contains entries strictly smaller than any entry in $A$ (except the last one entry in this column which equals to $1$).
With an $\Oh(r^2 \log r)$ time pre-processing, we can assume all the entries of the matrices $A$ and $B$ are distinct (similarly as in the proof of Theorem~\ref{thm:lb_graph}).

The first $r$ voters will look like the voters in the proof of Theorem~\ref{thm:lb_graph}.
Specifically, the $k$-th voter associates a number $A_{i, k}$ with candidate $u_i$ for every $i \in [r]$, and a number $B_{k, j}$ with candidate $v_j$ for every $j \in [r]$.
Then the $k$-th voter prefers a candidate $x$ over a candidate $y$ if and only if the associated number of candidate $x$ is smaller than that of candidate $y$.
Moreover, the first $r$ voters always prefer $W$ the least and prefer $W'$ the second least.

The next $r$ voters will be similar to the first $r$ voters.
For any $k \in [r]$, the preference list of the $(k+r)$-th voter is the same as the preference list of the $k$-th voter, except that $(k+r)$-th voter always prefers $W$ the most and prefer $W'$ the second most.

Let us consider how the first $2r$ voters will affect the values of $M$.
From the first $2r$ voters, we add $2C_{i,j}$ to $M(u_i, v_j)$ for all $i, j \in [r]$, add $2r - 2C_{i,j}$ to $M(v_j, u_i)$ for all $i, j \in [r]$
and add $r$ to all edges with one endpoint being $W$ or $W'$.
We call these edges \emph{important} as other edge weights will not be important in our analysis.

The preference lists between the $(2r+1)$-th voter and the $(3r-1)$-th voter are all the following:
$$u_1 \prec u_2 \prec \cdots \prec u_r \prec W \prec v_1 \prec \cdots \prec v_r \prec W'.$$

The preference lists between the $(3r)$-th voter and the $(4r-2)$-th voter are all the following:
$$W' \prec v_r \prec \cdots \prec v_1 \prec u_r \prec u_{r-1} \prec \cdots \prec u_1 \prec W.$$

Note that from these two types of voters, we add $2r-2$ to $M(W, u_i)$ for all $i \in [r]$, add $0$ to $M(u_i, W)$ for all $i \in [r]$, and add $r-1$ to all other important edges.
We can apply the same idea for other voters and manipulate the edge weights as follows.
\begin{itemize}
    \item We can add $2r$ voters whose preference lists will add $2r$ to $M(W, W')$, add $0$ to $M(W', W)$, and add $r$ to all other edges.
          For this case, our construction is the same as the  McGarvey's method~\cite{mcgarvey1953theorem}.
    \item We can add $2r$ voters whose preference lists will add $2r$ to $M(W', v_j)$ for every $j \in [r]$, add $0$ to $M(v_j, W')$ for every $j \in [r]$, and add $r$ to all other edges.
    \item We can add $2r$ voters whose preference lists will add $2r$ to $M(v_j, W)$ for every $j \in [r]$, add $0$ to $M(W, v_j)$ for every $j \in [r]$, and add $r$ to all other edges.
\end{itemize}

Overall, the values $M(u, v)$ are summarized in Table~\ref{tab:lb_testing_winner_M}, and the edge weights of the weighted majority graph are summarized in Table~\ref{tab:lb_testing_winner_edge}.
\begin{table}[ht]
    \centering
    \begin{tabular}{|c|c|c|c|c|}
    \hline
     \diagbox{$u$}{$v$} & $W$ & $W'$ & $u_i$ & $v_j$\\
    \hline
        $W$ & $\star$ & $6r-1$& $6r-2$ & $4r-1$\\
    \hline
        $W'$ & $4r-1$ & $\star$ & $5r-1$ & $6r-1$\\
    \hline
        $u_i$ & $4r$ & $5r-1$ & $\star$ & $2C_{i,j}+4r-1$\\
    \hline
        $v_j$& $6r-1$& $4r-1$ & $6r-1-2C_{i,j}$ & $\star$ \\
    \hline
    \end{tabular}
    \caption{The values $M(u, v)$.
    The entries marked as $\star$ are not important in our analysis.}
    \label{tab:lb_testing_winner_M}
\end{table}

\begin{table}[ht]
    \centering
    \begin{tabular}{|c|c|c|c|c|}
    \hline
     \diagbox{$u$}{$v$} &$W$ & $W'$ & $u_i$ & $v_j$\\
    \hline
        $W$ & $\star$ & $2r$& $2r-2$ & $-2r$\\
    \hline
        $W'$ & $-2r$ & $\star$ & $0$ & $2r$\\
    \hline
        $u_i$ & $-2r+2$ & $0$ & $\star$ & $4C_{i,j}-2r$\\
    \hline
        $v_j$& $2r$& $-2r$ & $2r-4C_{i,j}$ & $\star$ \\
    \hline
    \end{tabular}
    \caption{The weights $w(u, v)$ in the weighted majority graph.
    The entries marked as $\star$ are not important in our analysis.}
    \label{tab:lb_testing_winner_edge}
\end{table}

\newpage 
\begin{claim}
\label{cl:lb_testing_winner}
The Dominance Product $C$ between $A$ and $B$ contains an entry of value $r$ if and only if $W$ is not a Schulze winner in the preference profile described above.
\end{claim}
\begin{proof}
Suppose $C$ contains an entry $C_{i,j}$ where $C_{i,j}=r$ for some $i, j \in [r]$.
In this case, $w(u_i, v_j) = 2r$ and $w(v_j, W) = 2r$, so $B_G(u_i, W) \ge 2r$.
From the $u_i$ column in Table~\ref{tab:lb_testing_winner_edge} we have that all weights of edges that enter the set $\{u_1, \ldots, u_r\}$ are at most $2r-2$
(recall our assumption that all entries in $C$ are positive).
Therefore, any path going from $W$ to $u_i$ must use such an entering edge, so $B_G(W, u_i) \le 2r-2 < B_G(u_i, W)$.
Thus, $W$ is not a winner.

Now we prove the reverse direction.
Suppose $C$ has no entry of value $r$.
First of all, $B_G(W, W') \ge 2r$ since $w(W, W') = 2r$.
To travel from $W'$ to $W$, the last edge will have a value corresponding to the column $W$ in Table~\ref{tab:lb_testing_winner_edge}, so the last edge has weight at most $2r$.
Therefore, $B_G(W', W) \le 2r \le B_G(W, W')$.

Secondly, for any $j \in [r]$, $B_G(W, v_j) \ge 2r$ since the path $W \rightarrow W' \rightarrow v_j$ has bottleneck $2r$.
Same as the previous case, in order to travel from $v_j$ to $W$, the last edge has weight at most $2r$, so $B_G(v_j, W) \le 2r \le B_G(W, v_j)$.

Finally, for any $i \in [r]$, $B_G(W, u_i) \ge 2r-2$ since $w(W, u_i) = 2r-2$.
To travel from $u_i$ to $W$, we need to leave the set $\{u_1, \ldots, u_r\}$ at some point.
When we do it, we use an important edge weight from the row $u_i$ of Table~\ref{tab:lb_testing_winner_edge}.
Since $C$ has no weight $r$ entry, all important weights in row $u_i$ of Table~\ref{tab:lb_testing_winner_edge} are at most $2r-4$.
Therefore, $B_G(u_i, W) \le 2r-4 < B_G(W, u_i)$.

Thus, $W$ is a winner when $C$ has no entry of value $r$.
\end{proof}

Hence, if we run the assumed $\Oh(T(n)) = \Oh(T(10r-2))$ time algorithm on the preference profile described above to check whether candidate $W$ is a Schulze winner, we could use Claim~\ref{cl:lb_testing_winner} to decide if $C$ has an entry of value $r$, and thus solve the Dominating Pairs problem.
If $T(n)$ is super polynomial, then the theorem trivially holds as the Dominating Pairs problem is polynomial-time solvable; otherwise we have $\Oh(T(10r-2)) = \Oh(T(r))$.

The same construction also works for the reduction from Dominating Pairs to \schulzew.

Suppose $C$ contains an entry $C_{i,j}=r$ for some $i, j \in [r]$. In this case, it is easy to check that $B_G(u_i, W) \ge 2r$, $B_G(u_i, W') \ge 2r$ and $B_G(u_i, v_{j'}) \ge 2r$ for every $j' \in [r]$. However, we have that all weights of edges that enter the set $\{u_1, \ldots, u_r\}$ are at most $2r-2$ (recall our assumption that all entries in $C$ are positive), which means that $B_G(W, u_i) \le 2r-2$, $B_G(W', u_i) \le 2r-2$ and $B_G(v_{j'}, u_i) \le 2r-2$ for every $j' \in [r]$. This means that only vertices in $\{u_1, \ldots, u_r\}$ can be Schulze winners when $C$ has an entry of value $r$.

Now suppose $C$ does not have an entry of value $r$. In this case, we see that $B_G(W, u_i) \ge 2r-2$ for every $i \in [r]$. However, in order to leave the set $\{u_1, \ldots, u_r\}$, we must use an edge of weight at most $2r-4$, so $B_G(u_i, W) \le 2r-4$ for every $i \in [r]$. Therefore, none of the vertices in $\{u_1, \ldots, u_r\}$ can be a Schulze winner when $C$ does not have an entry of value $r$.

Hence, if we run the assumed $\Oh(T(n)) = \Oh(T(10r-2))$ time algorithm on the preference profile described above and find a Schulze winner (one always exists~\cite{Schulze11}), we can decide if the Dominating Pairs problem has a solution by checking if the winner is from the set $\{u_1, \ldots, u_r\}$. Similar as before, if $T(n)$ is super polynomial, then the theorem trivially holds as the Dominating Pairs problem is polynomial-time solvable; otherwise we have $\Oh(T(10r-2)) = \Oh(T(r))$.
\end{proof}

\section{Conclusions}
This paper considered the Schulze voting method and gave new algorithms and fine-grained conditional lower bounds for central problems such as computing the weighted majority graph (useful for other voting rules as well), verifying whether a candidate is a winner, finding an arbitrary winner and computing all winners.

It is worth mentioning that while we focused on weighted majority graphs similarly to previous works\footnote{These assumed additionally that votes are linear orders---in contrast, we allow weak preference orders.} \cite{ParkesX12,GaspersKNW13,MentonS13,ReischRS14,HemaspaandraLM16,CsarLP18} our algorithms work for {\em arbitrary} weighted directed graphs; let us call these {\it comparison graphs}.
This means that we cover all possible weak orders $\succeq_D$ on $\naturals_0 \times \naturals_0$ that compare {\it the strength of the link} $(M(u,v),M(v,u)) \in \naturals_0 \times \naturals_0$~\cite{Schulze11}.
For a given instance with $n$ voters and $m$ candidates there are exactly $m(m-1)$ direct ordered comparisons between the candidates.
Even if $\succeq_D$ is defined on $(n+1)^2$ different pairs, only at most $m(m-1)$ pairs appear in the instance.
Therefore, for a given instance, we can encode the present pairs as numbers from the set $\{1,2,\dots,m(m-1)\}$ and use a standard comparison relation instead of $\succeq_D$.
Then we simply use these numbers as weights in the comparison graph.
In such a way, our algorithms work for all comparison relations mentioned by Schulze~\cite{Schulze11}, i.e., {\it margin}, {\it ratio}, {\it winning votes}, and {\it losing votes}.
Note that a weighted majority graph is defined as a comparison graph for the relation $\succeq_{\text{margin}}$ s.t. $(a,b) \succeq_{\text{margin}} (c,d)$ if and only if $a-b \geq c-d$.

Despite the fact that we have a lower bound on constructing the weighted majority graph it does not mean we have a lower bound on any voting rule which is defined using the weighted majority graph (e.g. tournament solutions~\cite{BrandtBH16}).
Indeed, this does not exclude an other way of finding a winner under a voting rule without construction of the weighted majority graph.
In the Schulze method it happens that the lower bound for finding a winner (Theorem~\ref{thm:lb_testing_winner}) is almost the same (up to technical details) as the lower bound for constructing the weighted majority graph (Theorem~\ref{thm:lb_graph}).
It is interesting to investigate for which voting rules (originally defined using the weighted majority graph) we can find winners in time faster than $\Oh(m^{2.5-\eps})$ for some $\eps>0$.
It would require to use different techniques than these presented in the paper.

On the other side, it is worth applying a fine-grained complexity approach on voting rules similar to the Schulze method, in particular, these satisfying the axioms listed in the Section~\ref{sec:intro}.
Despite those similarities, they might require different techniques than used in this paper.
More generally, it is interesting to research on fine-grained complexity for other computational social choice problems (not necessarily related to graph problems) that already have polynomial time algorithms (e.g., dynamic programming approach used for singled-peaked elections~\cite{BetzlerSU13}).

\section*{Acknowledgments}
We thank David Eppstein for alerting us to the Wikipedia article~\cite{wikiwidest}.
We would like to thank the anonymous reviewers of EC 2021 for their helpful comments.

Virginia Vassilevska Williams was supported by an NSF CAREER Award, NSF Grants CCF-1528078, CCF-1514339 and CCF-1909429, a BSF Grant BSF:2012338, a Google Research Fellowship and a Sloan Research Fellowship.
Yinzhan Xu was supported by NSF Grant CCF-1528078.
Krzysztof Sornat was partially supported by the National Science Centre, Poland (NCN; grant number 2018/28/T/ST6/00366) and the European Research Council (ERC) under the European Union’s Horizon 2020 research and innovation programme (grant agreement No 101002854).
\begin{figure}[!h]
  \center
  \includegraphics[width=80pt]{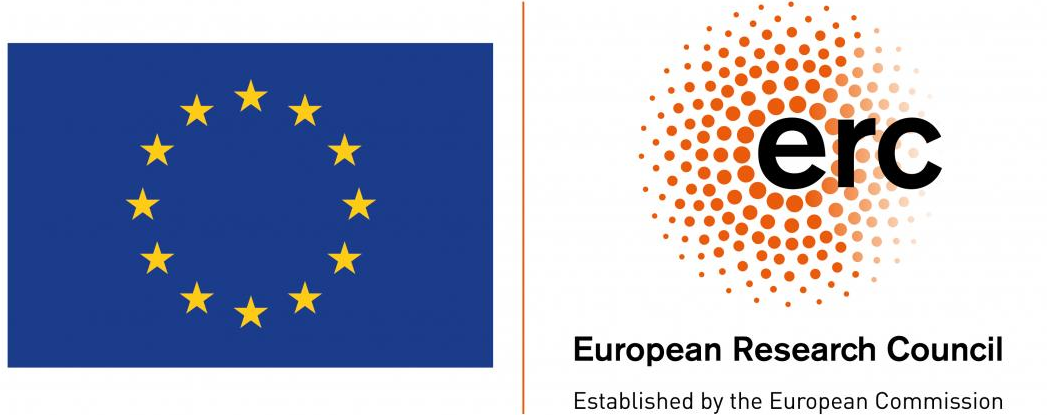}
\end{figure}

\bibliographystyle{alpha}
\bibliography{bib}

\end{document}